\newcommand{\deps}[1]{\text{deps}}
\newcommand{\msg}[1]{\langle #1 \rangle}
\newcommand{\leadercolor}{flatblue}
\newcommand{\proposercolor}{flatpurple}
\newcommand{\replicacolor}{flatred}
\newcommand{\clientcolor}{white}
\newcommand{\depservicecolor}{flatgreenalt}
\newcommand{\consensuscolor}{flatbrown}
\algnewcommand{\algorithmicstate}{\textbf{State:}}
\algnewcommand{\GlobalState}{\item[\algorithmicstate]}
\begin{document}
\date{}

\title{Bipartisan Paxos: A Modular State Machine Replication Protocol}

\author{%
  \rm
  Michael Whittaker$^1$,
  Neil Giridharan$^1$,
  Adriana Szekeres$^2$,
  Joseph M. Hellerstein$^1$,
  Ion Stoica$^1$\\
  $^1$UC Berkeley,
  $^2$University of Washington
}

\maketitle

{\begin{abstract}
  There is no shortage of state machine replication protocols. From Generalized
  Paxos to EPaxos, a huge number of replication protocols have been proposed
  that achieve high throughput \emph{and} low latency. However, these protocols
  all have two problems. First, they do not scale. Many protocols actually slow
  down when you scale them, instead of speeding up. For example, increasing the
  number of MultiPaxos acceptors increases quorum sizes and slows down the
  protocol. Second, they are too complicated. This is not a secret; state
  machine replication is notoriously difficult to understand.

  In this paper, we tackle both problems with a single solution:
  \emph{modularity}. We present Bipartisan Paxos (BPaxos), a modular state
  machine replication protocol. Modularity yields high throughput via scaling.
  We note that while many replication protocol components do not scale, some
  do. By modularizing BPaxos, we are able to disentangle the two and scale the
  bottleneck components to increase the protocol's throughput. Modularity also
  yields simplicity. BPaxos is divided into a number of independent modules
  that can be understood and proven correct in isolation.
\end{abstract}
}
{\section{Introduction}
State machine replication protocols like MultiPaxos~\cite{lamport1998part,
lamport2001paxos} and Raft~\cite{ongaro2014search} allow a state machine to be
executed in unison across a number of machines, despite the possibility of
faults. Today, state machine replication is pervasive. Nearly every strongly
consistent distributed system is implemented with some form of state machine
replication~\cite{corbett2013spanner, thomson2012calvin, burrows2006chubby,
baker2011megastore, cockroach2019website, cosmos2019website, tidb2019website,
yugabyte2019website}.

MultiPaxos is one of the oldest and one of the most widely used state machine
replication protocols. However, despite its popularity, MultiPaxos does not have
optimal throughput or optimal latency. In response, a number of state machine
replication protocols have been proposed to address MultiPaxos' suboptimal
performance~\cite{%
  arun2017speeding,
  biely2012s,
  howard2016flexible,
  lamport2005generalized,
  lamport2006fast,
  li2016just,
  mao2008mencius,
  moraru2013there,
  nawab2018dpaxos,
  park2019exploiting,
  ports2015designing
}.
These protocols use sophisticated techniques that either increase MultiPaxos'
throughput, decrease its latency, or both

These sophisticated replication protocols have two shortcomings: they do not
scale, and they are very complex. In this paper, we address both these
shortcomings with a single solution: \textbf{modularity}. We present Bipartisan
Paxos (BPaxos), a state machine replication protocol that is composed of a
number of independent modules. Modularity allows us to achieve state-of-the-art
throughput via a straightforward form of scaling. Furthermore, modularity makes
BPaxos significantly easier to understand compared to similar protocols.

\paragraph{Scaling}
Simple state machine replication protocols like MultiPaxos and Raft cannot take
advantage of scaling. Conventional wisdom encourages us to use as few nodes as
possible when deploying these protocols: ``using more than $2f+1$ replicas for
$f$ failures is possible but illogical because it requires a larger quorum size
with no additional benefit''~\cite{zhang2018building}. While some protocols use
multiple leaders~\cite{mao2008mencius, moraru2013there, arun2017speeding}, the
number of leaders is fixed (typically $2f+1$ leaders to tolerate $f$ faults),
which only alleviates but does not solve the scalability problem.
%

BPaxos, on the other hand, employs a straightforward form of scaling to achieve
high throughput. A BPaxos deployment consists of a set of leaders, dependency
service nodes, proposers, acceptors, and replicas. We will see later that
dependency service nodes, acceptors, and replicas do not scale. This is why
conventional wisdom dictates using as few of these nodes as possible. However,
leaders and proposers operate independently from one another and are thus
``embarrassingly scalable''. Moreover, when we analyze the performance of
BPaxos, we find that these leaders and proposers are the throughput bottleneck.
By increasing the number of leaders and proposers, we increase the protocol's
throughput. Note that BPaxos does not horizontally scale forever. Scaling the
leaders and proposers shifts the bottleneck to other non-scalable components.
With scaling, BPaxos is able to achieve roughly double the peak throughput of
EPaxos, a state-of-the-art replication protocol.

This straightforward form of scaling has been largely overlooked because most
existing replication protocols tightly couple their components together. For
example, an EPaxos replica plays the role of a leader, a dependency service
node, an acceptor, \emph{and} a replica~\cite{arun2017speeding}. This tight
coupling has a number of advantages---e.g., messages sent between co-located
nodes do not have to traverse the network, redundant metadata can be coalesced,
fast paths can be taken to reduce latency, and so on. However, tight coupling
lumps together components that do not scale with components that do. This
prevents independently scaling bottleneck components. BPaxos' modularity is the
key enabling feature that allows us to perform independent scaling.

\paragraph{Simplicity}
MultiPaxos is notoriously difficult to understand, and sophisticated protocols
that improve it are significantly more complex. BPaxos' modular design, on the
other hand, makes the protocol much easier to understand compared to these
sophisticated protocols. Each module can be understood and proven correct in
isolation, allowing newcomers to understand the protocol piece by piece,
something that is difficult to do with existing protocols in which components
are tightly coupled.

Moreover, some of the modules implement well-known abstractions for which
well-established protocols already exist. In these cases, BPaxos can leverage
existing protocols instead of reinventing the wheel. For example, BPaxos
depends on a module that implements consensus. Rather than implementing a
consensus protocol from scratch and having to prove it correct, BPaxos uses
Paxos off the shelf and inherits its safety properties. Many other
protocols~\cite{moraru2013there, arun2017speeding, nawab2018dpaxos} instead
implement consensus in a way that is specialized to each protocol. These
specialized consensus protocols are difficult to understand and difficult to
prove correct. As an anecdote, we discovered a minor bug in EPaxos'
implementation of consensus, which we confirmed with the authors, a bug that
went undiscovered for six years.


\paragraph{Summary}
In summary, we present the following contributions:
\begin{itemize}
  \item
    We introduce BPaxos, a modular, multileader, generalized state machine
    replication protocol that is significantly easier to understand compared to
    similar protocols.
  \item
    We describe how modularity enables a straightforward form of protocol
    scaling. We apply the technique to BPaxos and achieve double the peak
    throughput of a state-of-the-art replication protocol.
\end{itemize}
}
{\section{Background}

\subsection{Paxos}
Assume we have a number of clients, each with a value that they would like to
propose. The \defword{consensus} problem is for all members to agree on a
single value among the proposed values. A consensus protocol is a protocol that
implements consensus. Clients propose commands by sending them to the protocol.
The protocol eventually chooses a single one of the proposed values and returns
it to the clients.

Paxos~\cite{lamport1998part, lamport2001paxos} is one of the oldest and most
well studied consensus protocols. We will see later that BPaxos uses Paxos to
implement consensus, so it is important to be familiar with \emph{what} Paxos
is. Fortunately though, BPaxos treats Paxos like a black box, so we do not have
to concern ourselves with \emph{how} Paxos works.

\subsection{MultiPaxos}
Whereas consensus protocols like Paxos agree on a \emph{single} value,
\defword{state machine replication} protocols like MultiPaxos agree on a
\emph{sequence} of values called a log.  A state machine replication protocol
involves some number of replicas of a state machine, with each state machine
beginning in the same initial state. Clients propose commands to the
replication protocol, and the protocol orders the commands into an agreed upon
log that grows over time. Replicas execute entries in the log in prefix order.
By beginning in the same initial state and executing the same commands in the
same order, all the replicas are guaranteed to remain in sync.

MultiPaxos~\cite{van2015paxos} is one of the earliest and most popular state
machine replication protocols. MultiPaxos uses one instance of Paxos for every
log entry, agreeing on the log entries one by one. For example, it runs one
instance of Paxos to agree on the command chosen in log entry 0, one instance
for log entry 1, and so on. Over time, more and more commands are chosen, and
the log of chosen commands grows and grows. MultiPaxos replicas execute
commands as they are chosen, taking care not to execute the commands out of
order.

For example, consider the example execution of a MultiPaxos replica depicted in
\figref{ExampleMultiPaxosExecution}. The replica implements a key-value store
with keys $a$ and $b$. First, the command $a \gets 0$ (i.e.\ set $a$ to $0$) is
chosen in log entry $0$ (\figref{ExampleMultiPaxosExecutionA}), and the replica
executes the command (\figref{ExampleMultiPaxosExecutionB}). Then, the command
$a \gets b$ is chosen in log entry $2$ (\figref{ExampleMultiPaxosExecutionC}).
The replica cannot yet execute the command, because it must first execute the
command in log entry $1$, which has not yet been chosen
(\figref{ExampleMultiPaxosExecutionD}).  Finally, $b \gets 0$ is chosen in log
entry $1$ (\figref{ExampleMultiPaxosExecutionE}), and the replica can execute
the commands in both log entries $1$ and $2$. Note that the replica executes
the log in prefix order, waiting to execute a command if previous commands have
not yet been chosen and executed.

{\newlength{\logentryinnersep}
\setlength{\logentryinnersep}{2pt}
\newlength{\logentrylinewidth}
\setlength{\logentrylinewidth}{1pt}
\newlength{\logentrywidth}
\setlength{\logentrywidth}{\widthof{\scriptsize$a \gets 1$}+2\logentryinnersep}
\newcommand{\logindexcolor}{flatred}
\newcommand{\cmdi}{$a \gets 0$}
\newcommand{\cmdii}{$b \gets 0$}
\newcommand{\cmdiii}{$a \gets b$}

\tikzstyle{logentry}=[draw,
                      font=\scriptsize,
                      inner sep=\logentryinnersep,
                      line width=\logentrylinewidth,
                      minimum height=\logentrywidth,
                      minimum width=\logentrywidth]
\tikzstyle{executed}=[fill=gray, opacity=0.2, draw opacity=1, text opacity=1]
\tikzstyle{logindex}=[\logindexcolor]

\newcommand{\rightof}[1]{-\logentrylinewidth of #1}

\newcommand{\multipaxoslog}[6]{%
  \node[logentry, label={[logindex]90:0}, #2] (0) {#1};
  \node[logentry, label={[logindex]90:1}, right=\rightof{0}, #4] (1) {#3};
  \node[logentry, label={[logindex]90:2}, right=\rightof{1}, #6] (2) {#5};
}

\begin{figure}[ht]
  \begin{subfigure}[t]{0.45\columnwidth}
    \centering
    \begin{tikzpicture}
      \multipaxoslog{\cmdi}{}%
                    {}{}%
                    {}{}
    \end{tikzpicture}
    \caption{%
      \cmdi{} is chosen in entry $\textcolor{\logindexcolor}{0}$.
    }
    \figlabel{ExampleMultiPaxosExecutionA}
  \end{subfigure}\hspace{0.1\columnwidth}%
  \begin{subfigure}[t]{0.45\columnwidth}
    \centering
    \begin{tikzpicture}
      \multipaxoslog{\cmdi}{executed}%
                    {}{}%
                    {}{}
    \end{tikzpicture}
    \caption{%
      \cmdi{} is executed.
    }
    \figlabel{ExampleMultiPaxosExecutionB}
  \end{subfigure}

  \vspace{2pt}\textcolor{flatgray}{\rule{\columnwidth}{0.4pt}}

  \begin{subfigure}[t]{0.45\columnwidth}
    \centering
    \begin{tikzpicture}
      \multipaxoslog{\cmdi}{executed}%
                    {}{}%
                    {\cmdiii}{}
    \end{tikzpicture}
    \caption{%
      \cmdiii{} is chosen in entry $\textcolor{\logindexcolor}{2}$.
    }
    \figlabel{ExampleMultiPaxosExecutionC}
  \end{subfigure}\hspace{0.1\columnwidth}%
  \begin{subfigure}[t]{0.45\columnwidth}
    \centering
    \begin{tikzpicture}
      \multipaxoslog{\cmdi}{executed}%
                    {}{}%
                    {\cmdiii}{}
    \end{tikzpicture}
    \caption{%
      Nothing is executed.
    }
    \figlabel{ExampleMultiPaxosExecutionD}
  \end{subfigure}

  \vspace{2pt}\textcolor{flatgray}{\rule{\columnwidth}{0.4pt}}

  \begin{subfigure}[t]{0.45\columnwidth}
    \centering
    \begin{tikzpicture}
      \multipaxoslog{\cmdi}{executed}%
                    {\cmdii}{}%
                    {\cmdiii}{}
    \end{tikzpicture}
    \caption{%
      \cmdii{} is chosen in entry $\textcolor{\logindexcolor}{1}$.
    }
    \figlabel{ExampleMultiPaxosExecutionE}
  \end{subfigure}\hspace{0.1\columnwidth}%
  \begin{subfigure}[t]{0.45\columnwidth}
    \centering
    \begin{tikzpicture}
      \multipaxoslog{\cmdi}{executed}%
                    {\cmdii}{executed}%
                    {\cmdiii}{executed}
    \end{tikzpicture}
    \caption{%
      \cmdii{}, \cmdiii{} are executed.
    }
    \figlabel{ExampleMultiPaxosExecutionF}
  \end{subfigure}

  \caption{%
    An example of a MultiPaxos replica executing commands over time, as they
    are chosen
  }
  \figlabel{ExampleMultiPaxosExecution}
\end{figure}
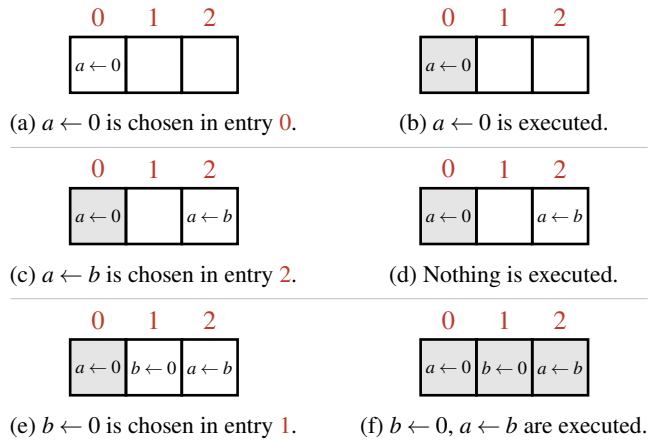
}

MultiPaxos is implemented with a set of nodes called proposers and a set of
nodes called acceptors. For this paper, we do not need to worry about the
details of how MultiPaxos works, but let us focus briefly on its communication
pattern. One of the proposers is designated a leader. Clients send all state
machine commands to this single leader. When the leader receives a command $x$,
it selects a log entry in which to place $x$ and then performs one round trip
of communication with the acceptors to get $x$ chosen in the log entry. Then,
it executes the command---once all commands in earlier log entries have been
chosen and executed---and returns to the client.  This communication pattern is
illustrated in \figref{MultiPaxosCommunication}.

{\input{figures/common.tex}

\tikzstyle{proc}=[draw, circle, thick]
\tikzstyle{proclabel}=[inner sep=0pt]
\tikzstyle{fakeproclabel}=[inner sep=0pt, white]
\tikzstyle{comm}=[-latex, thick]
\tikzstyle{commnum}=[fill=white, inner sep=0pt]
\newcommand{\acceptorcolor}{flatorange}

\begin{figure}[ht]
  \centering
  \begin{tikzpicture}[scale=0.75]
    \node[proc] (c) at (0, 2) {$c$};
    \node[proc, fill=\proposercolor!25] (p0) at (2, 2) {$p_0$};
    \node[proc, fill=\proposercolor!25] (p1) at (4, 2) {$p_1$};
    \node[proc, fill=\acceptorcolor!25] (a0) at (0, 0) {$a_0$};
    \node[proc, fill=\acceptorcolor!25] (a1) at (2, 0) {$a_1$};
    \node[proc, fill=\acceptorcolor!25] (a2) at (4, 0) {$a_2$};

    \crown{(p0.north)++(0, -0.15)}{1}{0.5}

    \node[proclabel, anchor=east] at (-0.5, 2) {Client};
    \node[fakeproclabel, anchor=west] (ps) at (5, 2) {Proposers};
    \halffill{ps}{\proposercolor!25}
    \node[proclabel, anchor=west] (ps) at (5, 2) {Proposers};
    \node[fakeproclabel, anchor=west] (as) at (5, 0) {Acceptors};
    \halffill{as}{\acceptorcolor!25}
    \node[proclabel, anchor=west] (as) at (5, 0) {Acceptors};

    \draw[comm, bend left=10] (c) to node[commnum, midway]{1} (p0);
    \draw[comm, bend left=10] (p0) to node[commnum, midway]{2} (a0);
    \draw[comm, bend left=10] (p0) to node[commnum, midway]{2} (a1);
    \draw[comm, bend left=10] (p0) to node[commnum, midway]{2} (a2);
    \draw[comm, bend left=10] (a0) to node[commnum, midway]{3} (p0);
    \draw[comm, bend left=10] (a1) to node[commnum, midway]{3} (p0);
    \draw[comm, bend left=10] (a2) to node[commnum, midway]{3} (p0);
    \draw[comm, bend left=10] (p0) to node[commnum, midway]{4} (c);
  \end{tikzpicture}
  \caption{%
    MultiPaxos communication pattern. The leader is adorned with a crown.
  }
  \figlabel{MultiPaxosCommunication}
\end{figure}
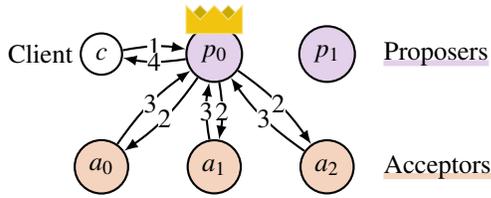}

\subsection{Multileader and Generalized Consensus}
MultiPaxos has a number of inefficiencies. Here, we focus on two well-known
ones. First, MultiPaxos' throughput is bottlenecked by the leader. As shown in
\figref{MultiPaxosCommunication}, \emph{every} command goes through the leader.
Thus, MultiPaxos can run only as fast as the leader can. Protocols like
Mencius~\cite{mao2008mencius}, EPaxos~\cite{moraru2013there}, and
Caesar~\cite{arun2017speeding} bypass the single leader bottleneck by having
multiple leaders that can process requests in parallel. We call these protocols
\defword{multileader} protocols.

Second, MultiPaxos requires that replicas execute \emph{all} commands in the
same order. That is, MultiPaxos establishes a \emph{total order} of commands.
This is overkill. If two commands commute, they can be executed by replicas in
either order. For example, key-value store replicas executing the log in
\figref{ExampleMultiPaxosExecution} could execute commands $a \gets 0$ and $b
\gets 0$ in either order since the two commands commute. More formally, we say
two commands \defword{conflict} if executing them in opposite orders yields
either different outputs or a different final state. State machine replication
protocols that only require \emph{conflicting} commands to be executed in the
same order are said to implement generalized
consensus~\cite{lamport2005generalized}. Colloquially, we say such a protocol
is \defword{generalized}. Generalized protocols establish a \emph{partial
order} of commands (as opposed to a total order) in which only conflicting
commands have to be ordered.

As a MultiPaxos leader receives commands from clients, it places them in
increasing log entries. The first command is placed in log entry 0, the second
in log entry 1, and so on. In this way, the leader acts as a sequencer,
sequencing commands into a single total order. Multileader protocols however, by
virtue of having multiple leaders, do not have a single designated node that
processes every command. This makes it challenging to establish a single total
order. As a result, most multileader protocols are also generalized. With
multiple concurrently executing leaders, it is easier to establish a partial
order than it is to establish a total order. Moreover, generalization allows
leaders processing non-conflicting commands to operate completely independently
from one another. While it is possible for a multileader protocol to establish
a total order (e.g., Mencius~\cite{mao2008mencius}), such protocols run only as
fast as the slowest replica (which lowers throughput), and involve all-to-all
communication among the leaders (which also lowers throughput).
}
{\section{Bipartisan Paxos}
BPaxos is a modular state machine replication protocol that is both multileader
and generalized. Throughout the paper, we make the standard assumptions that
the network is asynchronous, that state machines are deterministic, and that
machines can fail by crashing but cannot act maliciously. We also assume that
at most $f$ machines can fail for some integer-valued parameter $f$. Throughout
the paper, we omit low-level protocol details involving the re-sending of
dropped messages.

\subsection{BPaxos Command Execution}
MultiPaxos is \emph{not} generalized. It totally orders all commands by
sequencing them into a \emph{log}. BPaxos is generalized, so it ditches the log
and instead partially orders commands into a \emph{directed graph}, like the
ones shown in \figref{ExampleBPaxosExecution}.

BPaxos graphs are completely analogous to MultiPaxos logs. Every MultiPaxos log
entry corresponds to a \defword{vertex} in a BPaxos graph. Every MultiPaxos log
entry holds a command; so does every vertex. Every log entry is uniquely
identified by its index (e.g., \textcolor{flatred}{$0$}); every vertex is
uniquely identified by a \defword{vertex id} (e.g.,
\textcolor{flatred}{$v_0$}). The one difference between graphs and logs are the
edges. Every BPaxos vertex $v$ has edges to some set of other vertices. These
edges are called the \defword{dependencies} of $v$. Note that we view a
vertex's dependencies as belonging to the vertex, so when we refer to a vertex,
we are also referring to its dependencies. The similarities between MultiPaxos
logs and BPaxos graphs are summarized in \tabref{MultiPaxosVsBPaxos}.

\begin{table}[ht]
  \centering
  \caption{A comparison of MultiPaxos log entries and BPaxos vertices.}
  \tablabel{MultiPaxosVsBPaxos}
  \begin{tabular}{r|l}
    \textbf{BPaxos} & \textbf{MultiPaxos} \\\hline
    graph           & log \\
    vertex          & log entry \\
    vertex id       & index \\
    command         & command \\
    dependencies    & - \\
  \end{tabular}
\end{table}

MultiPaxos grows its \emph{log} over time by repeatedly reaching consensus on
one \emph{log entry} at a time. BPaxos grows its \emph{graph} over time by
repeatedly reaching consensus on one \emph{vertex} (and its dependencies) at a
time. MultiPaxos replicas execute logs in prefix order, making sure not to
execute a command until after executing \emph{all previous commands}. BPaxos
replicas execute graphs in prefix order (i.e. reverse topological order),
making sure not to execute a command until after executing \emph{its
dependencies}.

An example of how BPaxos graphs grow over time and how a BPaxos replica
executes these graphs in shown in \figref{ExampleBPaxosExecution}. As you read
through the figure, note the similarities with
\figref{ExampleMultiPaxosExecution}.
First, the command $a \gets 0$ is chosen in vertex $v_0$ with no dependencies
(\figref{ExampleBPaxosExecutionA}).
Because the vertex has no dependencies, the replica executes $a \gets 0$
immediately (\figref{ExampleBPaxosExecutionB}).
Next, the command $a \gets b$ is chosen in vertex $v_2$ with dependencies on
vertices $v_0$ and $v_1$ (\figref{ExampleBPaxosExecutionC}).
$v_2$ depends on $v_1$, but a command has not yet been chosen in $v_1$, so the
replica does \emph{not} yet execute $a \gets b$
(\figref{ExampleBPaxosExecutionD}).
Finally, the command $b \gets 0$ is chosen in vertex $v_1$ with no
dependencies (\figref{ExampleBPaxosExecutionE}).
Because $b \gets 0$ has no dependencies, the replica executes it immediately.
Moreover, all of $v_2$'s dependencies have been executed, so the replica now
executes $a \gets b$ (\figref{ExampleBPaxosExecutionF}).

{\newlength{\vertexinnersep}
\setlength{\vertexinnersep}{2pt}
\newlength{\vertexlinewidth}
\setlength{\vertexlinewidth}{1pt}
\newlength{\vertexwidth}
\setlength{\vertexwidth}{\widthof{\scriptsize $a \gets 1$}+2\vertexinnersep}
\newcommand{\graphindexcolor}{flatred}
\newcommand{\cmdi}{$a \gets 0$}
\newcommand{\cmdii}{$b \gets 0$}
\newcommand{\cmdiii}{$a \gets b$}
\newcommand{\xscale}{0.75}
\newcommand{\yscale}{0.75}

\tikzstyle{vertex}=[draw,
                    font=\scriptsize,
                    inner sep=\vertexinnersep,
                    line width=\vertexlinewidth,
                    minimum height=\vertexwidth,
                    minimum width=\vertexwidth]

\tikzstyle{executed}=[fill=gray, opacity=0.2, draw opacity=1, text opacity=1]

\tikzstyle{dep}=[-latex, thick]

\tikzstyle{graphindex}=[\graphindexcolor]

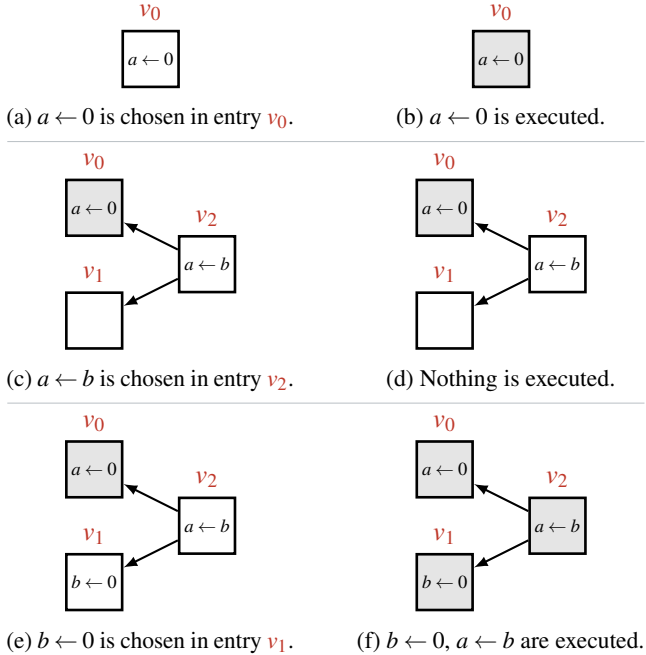
\begin{figure}
  \begin{subfigure}[t]{0.45\columnwidth}
    \centering
    \begin{tikzpicture}[xscale=\xscale, yscale=\yscale]
      \node[vertex, label={[graphindex]90:$v_0$}] (0) at (0, 2) {\cmdi{}};
    \end{tikzpicture}
    \caption{%
      \cmdi{} is chosen in entry $\textcolor{\graphindexcolor}{v_0}$.
    }
    \figlabel{ExampleBPaxosExecutionA}
  \end{subfigure}\hspace{0.1\columnwidth}%
  \begin{subfigure}[t]{0.45\columnwidth}
    \centering
    \begin{tikzpicture}[xscale=\xscale, yscale=\yscale]
      \node[vertex, executed, label={[graphindex]90:$v_0$}] (0) at (0, 2)
        {\cmdi{}};
    \end{tikzpicture}
    \caption{%
      \cmdi{} is executed.
    }
    \figlabel{ExampleBPaxosExecutionB}
  \end{subfigure}

  \vspace{2pt}\textcolor{flatgray}{\rule{\columnwidth}{0.4pt}}

  \begin{subfigure}[t]{0.45\columnwidth}
    \centering
    \begin{tikzpicture}[xscale=\xscale, yscale=\yscale]
      \node[vertex, executed, label={[graphindex]90:$v_0$}] (0) at (0, 2)
        {\cmdi{}};
      \node[vertex, label={[graphindex]90:$v_2$}] (2) at (2, 1) {\cmdiii{}};
      \node[vertex, label={[graphindex]90:$v_1$}] (1) at (0, 0) {};
      \draw[dep] (2) to (0);
      \draw[dep] (2) to (1);
    \end{tikzpicture}
    \caption{%
      \cmdiii{} is chosen in entry $\textcolor{\graphindexcolor}{v_2}$.
    }
    \figlabel{ExampleBPaxosExecutionC}
  \end{subfigure}\hspace{0.1\columnwidth}%
  \begin{subfigure}[t]{0.45\columnwidth}
    \centering
    \begin{tikzpicture}[xscale=\xscale, yscale=\yscale]
      \node[vertex, executed, label={[graphindex]90:$v_0$}] (0) at (0, 2)
        {\cmdi{}};
      \node[vertex, label={[graphindex]90:$v_2$}] (2) at (2, 1) {\cmdiii{}};
      \node[vertex, label={[graphindex]90:$v_1$}] (1) at (0, 0) {};
      \draw[dep] (2) to (0);
      \draw[dep] (2) to (1);
    \end{tikzpicture}
    \caption{%
      Nothing is executed.
    }
    \figlabel{ExampleBPaxosExecutionD}
  \end{subfigure}

  \vspace{2pt}\textcolor{flatgray}{\rule{\columnwidth}{0.4pt}}

  \begin{subfigure}[t]{0.45\columnwidth}
    \centering
    \begin{tikzpicture}[xscale=\xscale, yscale=\yscale]
      \node[vertex, executed, label={[graphindex]90:$v_0$}] (0) at (0, 2)
        {\cmdi{}};
      \node[vertex, label={[graphindex]90:$v_2$}] (2) at (2, 1) {\cmdiii{}};
      \node[vertex, label={[graphindex]90:$v_1$}] (1) at (0, 0) {\cmdii{}};
      \draw[dep] (2) to (0);
      \draw[dep] (2) to (1);
    \end{tikzpicture}
    \caption{%
      \cmdii{} is chosen in entry $\textcolor{\graphindexcolor}{v_1}$.
    }
    \figlabel{ExampleBPaxosExecutionE}
  \end{subfigure}\hspace{0.1\columnwidth}%
  \begin{subfigure}[t]{0.45\columnwidth}
    \centering
    \begin{tikzpicture}[xscale=\xscale, yscale=\yscale]
      \node[vertex, executed, label={[graphindex]90:$v_0$}] (0) at (0, 2)
        {\cmdi{}};
      \node[vertex, executed, label={[graphindex]90:$v_2$}] (2) at (2, 1)
        {\cmdiii{}};
      \node[vertex, executed, label={[graphindex]90:$v_1$}] (1) at (0, 0)
        {\cmdii{}};
      \draw[dep] (2) to (0);
      \draw[dep] (2) to (1);
    \end{tikzpicture}
    \caption{%
      \cmdii{}, \cmdiii{} are executed.
    }
    \figlabel{ExampleBPaxosExecutionF}
  \end{subfigure}

  \caption{%
    An example of a BPaxos replica executing commands over time, as they are
    chosen.
  }
  \figlabel{ExampleBPaxosExecution}
\end{figure}
}

Before we discuss the mechanisms that BPaxos uses to construct these graphs,
note the following three graph properties.

\paragraph{Vertices are chosen once and for all.}
BPaxos reaches consensus on every vertex, so once a vertex has been chosen, it
will never change. Its command will not change, it will not lose dependencies,
and it will not get new dependencies.

\paragraph{Cycles can happen, but are not a problem.}
We will see in a moment that BPaxos graphs can sometimes be cyclic. These cycles
are a nuisance, but easily handled. Instead of executing graphs in reverse
topological order one \emph{command} at a time, replicas instead execute graphs
in reverse topological order one \emph{strongly connected component} at a time.
The commands within a strongly connected component are executed in an arbitrary
yet deterministic order (e.g., in vertex id order). This is illustrated in
\figref{ExampleBPaxosCycleExecution}.

{\newlength{\cyclevertexinnersep}
\setlength{\cyclevertexinnersep}{4pt}
\newlength{\cyclevertexlinewidth}
\setlength{\cyclevertexlinewidth}{1pt}
\newlength{\cyclevertexwidth}
\setlength{\cyclevertexwidth}{\widthof{$x$}+2\cyclevertexinnersep}
\newcommand{\graphindexcolor}{flatred}
\newcommand{\cmdi}{$a \gets 0$}
\newcommand{\cmdii}{$b \gets 0$}
\newcommand{\cmdiii}{$a \gets b$}
\newcommand{\xscale}{0.5}
\newcommand{\yscale}{0.5}

\tikzstyle{vertex}=[draw,
                    inner sep=\cyclevertexinnersep,
                    line width=\cyclevertexlinewidth,
                    minimum height=\cyclevertexwidth,
                    minimum width=\cyclevertexwidth]

\tikzstyle{executed}=[fill=gray, opacity=0.2, draw opacity=1, text opacity=1]

\tikzstyle{dep}=[-latex, thick]

\tikzstyle{graphindex}=[\graphindexcolor]

\begin{figure}
  \centering

  \begin{subfigure}[t]{0.3\columnwidth}
    \centering
    \begin{tikzpicture}[xscale=\xscale, yscale=\yscale]
      \node[vertex, executed, label={[graphindex]90:$v_x$}] (x) at (0, 1) {$x$};
      \node[vertex, label={[graphindex]90:$v_y$}] (y) at (2, 2) {$y$};
      \node[vertex, label={[graphindex]-90:$v_z$}] (z) at (2, 0) {};
      \draw[dep] (y) to (x);
      \draw[dep, bend left] (y) to (z);
    \end{tikzpicture}
    \caption{}
    \figlabel{ExampleBPaxosCycleExecutionA}
  \end{subfigure}\hspace{0.04\columnwidth}%
  \begin{subfigure}[t]{0.3\columnwidth}
    \centering
    \begin{tikzpicture}[xscale=\xscale, yscale=\yscale]
      \node[vertex, executed, label={[graphindex]90:$v_x$}] (x) at (0, 1) {$x$};
      \node[vertex, label={[graphindex]90:$v_y$}] (y) at (2, 2) {$y$};
      \node[vertex, label={[graphindex]-90:$v_z$}] (z) at (2, 0) {$z$};
      \draw[dep] (y) to (x);
      \draw[dep, bend left] (y) to (z);
      \draw[dep, bend left] (z) to (y);
    \end{tikzpicture}
    \caption{}
    \figlabel{ExampleBPaxosCycleExecutionB}
  \end{subfigure}\hspace{0.04\columnwidth}%
  \begin{subfigure}[t]{0.3\columnwidth}
    \centering
    \begin{tikzpicture}[xscale=\xscale, yscale=\yscale]
      \node[vertex, executed, label={[graphindex]90:$v_x$}] (x) at (0, 1) {$x$};
      \node[vertex, executed, label={[graphindex]90:$v_y$}] (y) at (2, 2) {$y$};
      \node[vertex, executed, label={[graphindex]-90:$v_z$}] (z) at (2, 0) {$z$};
      \draw[dep] (y) to (x);
      \draw[dep, bend left] (y) to (z);
      \draw[dep, bend left] (z) to (y);
    \end{tikzpicture}
    \caption{}
    \figlabel{ExampleBPaxosCycleExecutionB}
  \end{subfigure}

  \caption{
    An example of a BPaxos replica executing a cyclic graph.
    (a) $y$ cannot be exeucted until $v_z$ is chosen.
    (b) $v_z$ is chosen. $v_y$ and $v_z$ form a strongly connected component.
    (c) $y$ and $z$ are executed in an arbitrary yet deterministic order; $y$
        then $z$ or $z$ then $y$.%
  }
  \figlabel{ExampleBPaxosCycleExecution}
\end{figure}
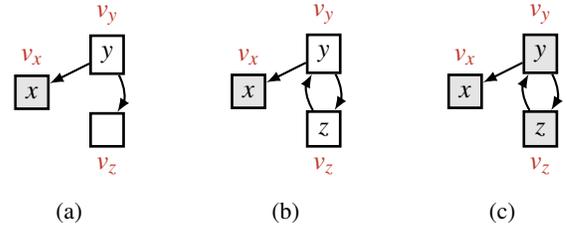
}

\paragraph{Conflicting commands depend on each other.}
Because BPaxos is generalized, only conflicting commands have to be ordered
with respect to each other. BPaxos ensures this by maintaining the following
invariant:
\begin{invariant}[\defword{dependency invariant}]\invlabel{KeyInvariant}
  If two conflicting commands $x$ and $y$ are chosen in vertices $v_x$ and
  $v_y$, then either $v_x$ depends on $v_y$ or $v_y$ depends on $v_x$ or both.
  That is, there is at least one edge between vertices $v_x$ and $v_y$.
\end{invariant}
If two commands have an edge between them, every replica executes them in the
same order.  The dependency invariant ensures that every conflicting pair of
commands has an edge between them, ensuring that all conflicting commands are
executed in the same order. Non-conflicting commands do not need an edge
between them and can be executed in any order.

\subsection{Protocol Overview}
BPaxos is composed of five modules: a dependency service, a consensus service,
a set of leaders, a set of proposers, and a set of replicas. Here, we give an
overview on how these modules interact by walking through the example execution
shown in \figref{BPaxosOverview}. In the next couple of sections, we discuss
each module in more detail.

1. A client $c$ sends a state machine command $x$ to leader $l_0$. Note that
all of the leaders process commands in parallel and that clients can send
commands to any of them.

2. Upon receiving command $x$, $l_0$ generates a globally unique vertex id
$v_x$ for $x$. It then sends the message $\msg{v_x, x}$ to the dependency
service.

3. Upon receiving message $\msg{v_x, x}$, the dependency service computes a set
of dependencies $\deps{}_x$ for vertex $v_x$. Later, we will see exactly how the
dependency service computes dependencies. For now, we overlook the details. The
dependency service then sends back the message $\msg{v_x, x, \deps{}_x}$ to
$l_0$.

4. $l_0$ forwards $\msg{v_x, x, \deps{}_x}$ to proposer $p_0$.

5. $p_0$ sends the message $\msg{v_x, x, \deps{}_x}$ to the consensus service,
proposing that the value $(x, \deps{}_x)$ be chosen in vertex $v_x$.

6. The consensus service implements one instance of consensus for every vertex.
Upon receiving $\msg{v_x, x, \deps{}_x}$, it chooses the value $(x, \deps{}_x)$
for vertex $v_x$ and notifies $p_0$ with the message $\msg{v_x, x, \deps{}_x}$.
Note that in this example, the consensus service chose the value proposed by
$p_0$. In general, the consensus service may choose some other value if other
proposers are concurrently proposing different values for vertex $v_x$.
However, we will see later that this can only happen during recovery and is
therefore not typical.

7. After $p_0$ learns that command $x$ with dependencies $\deps{}_x$ has been
chosen in vertex $v_x$, it notifies the replicas by broadcasting the message
$\msg{v_x, x, \deps{}_x}$.

8. Every replica manages a graph of chosen commands, as described in the
previous subsection. Upon receiving $\msg{v_x, x, \deps{}_x}$, a replica adds
the vertex $v_x$ to its graph with command $x$ and dependencies $\deps{}_x$.
As described earlier, the replicas execute their graphs in reverse topological
order. Once they have executed command $x$, yielding output $o$, one of the
replicas sends back the response to the client $c$. Given $r$ replicas, replica
$i$ sends back the response where $i = \text{hash}(v_x) \% r$ for some hash
function.

Pseudocode for BPaxos is given in \figref{BPaxosPseudocode}, and a TLA+
specification of BPaxos is given in \appref{TlaSpec}. We now detail each BPaxos
module. In the next section, we discuss why the dependency service, consensus
service, and replicas do not scale and why the leaders and proposers do.

{\input{figures/common.tex}

\tikzstyle{proc}=[draw, circle, thick, inner sep=2pt]
\tikzstyle{leader}=[proc, fill=\leadercolor!25]
\tikzstyle{proposer}=[proc, fill=\proposercolor!25]
\tikzstyle{replica}=[proc, fill=\replicacolor!25]
\tikzstyle{client}=[proc, fill=\clientcolor!25]
\tikzstyle{proclabel}=[inner sep=0pt]
\tikzstyle{fakeproclabel}=[inner sep=0pt, white]
\tikzstyle{comm}=[-latex, thick]
\tikzstyle{commnum}=[fill=white, inner sep=0pt]
\tikzstyle{service}=[draw, rounded corners, align=center, thick]
\tikzstyle{depservice}=[service, draw=\depservicecolor!50]
\tikzstyle{consensus}=[service, draw=\consensuscolor]
\tikzstyle{module}=[draw, thick, flatgray, rounded corners]

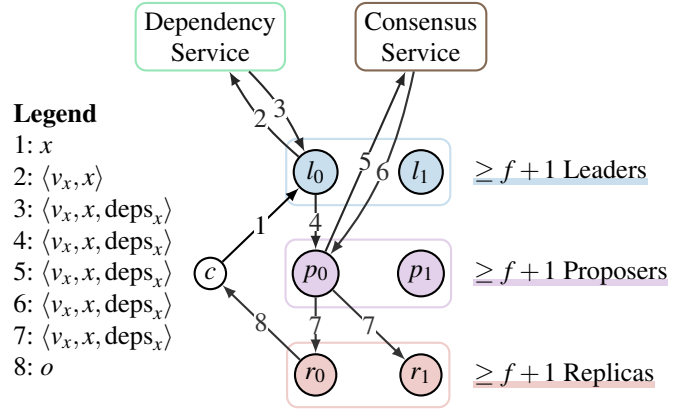
\begin{figure}[t]
  \centering

  \begin{tikzpicture}[yscale=0.9, xscale=0.70]
    \node[client] (c) at (0, 1.5) {$c$};
    \node[leader] (l0) at (2, 3) {$l_0$};
    \node[leader] (l1) at (4, 3) {$l_1$};
    \node[proposer] (p0) at (2, 1.5) {$p_0$};
    \node[proposer] (p1) at (4, 1.5) {$p_1$};
    \node[replica] (r0) at (2, 0) {$r_0$};
    \node[replica] (r1) at (4, 0) {$r_1$};
    \node[depservice] (depservice) at (0, 5) {Dependency\\Service};
    \node[consensus] (consensus) at (4, 5) {Consensus\\Service};

    \draw[module, \leadercolor!25]
      ($(l0.south west) + (-0.25, -0.25)$) rectangle
      ($(l1.north east) + (0.25, 0.25)$);
      \draw[module, \proposercolor!25]
      ($(p0.south west) + (-0.25, -0.25)$) rectangle
      ($(p1.north east) + (0.25, 0.25)$);
    \draw[module, \replicacolor!25]
      ($(r0.south west) + (-0.25, -0.25)$) rectangle
      ($(r1.north east) + (0.25, 0.25)$);

    \draw[black!100, comm] (c) to node[commnum]{$1$} (l0);
    \draw[black!85, comm, bend left=10] (l0) to node[commnum]{$2$} (depservice);
    \draw[black!85, comm, bend left=10] (depservice) to node[commnum]{$3$} (l0);
    \draw[black!85, comm] (l0) to node[commnum]{$4$} (p0);
    \draw[black!85, comm] (p0) to node[commnum]{$5$} (consensus);
    \draw[black!80, comm, bend left=15] (consensus) to node[commnum]{$6$} (p0);
    \draw[black!80, comm] (p0) to node[commnum]{$7$} (r0);
    \draw[black!80, comm] (p0) to node[commnum]{$7$} (r1);
    \draw[black!80, comm] (r0) to node[commnum]{$8$} (c);

    \node[fakeproclabel, anchor=west] (leaders) at (5, 3) {$\geq f+1$ Leaders};
    \halffill{leaders}{\leadercolor!25}
    \node[proclabel, anchor=west] at (leaders.west) {$\geq f+1$ Leaders};

    \node[fakeproclabel, anchor=west] (proposers) at (5, 1.5) {$\geq f+1$ Proposers};
    \halffill{proposers}{\proposercolor!25}
    \node[proclabel, anchor=west] at (proposers.west) {$\geq f+1$ Proposers};

    \node[fakeproclabel, anchor=west] (replicas) at (5, 0) {$\geq f+1$ Replicas};
    \halffill{replicas}{\replicacolor!25}
    \node[proclabel, anchor=west] at (replicas.west) {$\geq f+1$ Replicas};

    \node[align=left, anchor=east] at ($(c.west)+(-0.2, 0.5)$) {%
      \textbf{Legend} \\
      $1$: $x$ \\
      $2$: $\msg{v_x, x}$ \\
      $3$: $\msg{v_x, x, \deps{}_x}$ \\
      $4$: $\msg{v_x, x, \deps{}_x}$ \\
      $5$: $\msg{v_x, x, \deps{}_x}$ \\
      $6$: $\msg{v_x, x, \deps{}_x}$ \\
      $7$: $\msg{v_x, x, \deps{}_x}$ \\
      $8$: $o$
    };
  \end{tikzpicture}

  \caption{%
    An overview of BPaxos execution. Note that we show the execution of only a
    single command for simplicity, so only one leader and one proposer are
    active. In a real BPaxos deployment, there are multiple clients and every
    leader and every proposer is active.
  }\figlabel{BPaxosOverview}
\end{figure}}
{
\newcommand{\LineComment}[1]{\textcolor{gray}{// #1}}

\tikzstyle{comm}=[-latex, thick]
\tikzstyle{commnum}=[fill=white, inner sep=1pt]
\tikzstyle{proc}=[thick, text width=0.45\textwidth, anchor=north]

\begin{figure*}[t]
  \centering
  \begin{tikzpicture}
    \node[draw,
          thick,
          minimum width=6.8in,
          minimum height=0.25in]
          (clients) at (4.5, 1.25) {\large \textbf{Clients}};

    \node[proc,
          draw=\leadercolor,
          label={90:\textbf{\large Leader}}] (leader) at (0, 0) {%
      \newcommand{\leaderindex}{\textsf{leader index}}
      \newcommand{\nextid}{\textsf{next id}}
      \begin{algorithmic}[1]
        \GlobalState $\leaderindex{}$ \LineComment{e.g., $l_0$ has index $0$}
        \GlobalState $\nextid{} \gets 0$
        \Upon{receiving command $x$ from client}
          \State $v_x \gets (\leaderindex{}, \nextid{})$
          \State $\nextid{} \gets \nextid + 1$
          \State send $\msg{v_x, x}$ to dependency service nodes
        \EndUpon
        \Upon{receiving dependencies from $f + 1$ dependency service nodes for vertex $v_x$}
          \State let $\deps{}_1, \ldots, \deps{}_{f+1}$ be the dependencies
          \State $\deps{}_x \gets \bigcup_i \deps{}_i$
          \State send $\msg{v_x, x, \deps{}_x}$ to a proposer
        \EndUpon
      \end{algorithmic}
    };

    \node[proc,
          draw=\depservicecolor,
          label={90:\textbf{\large Dependency Service Node}},
          anchor=north] (depservice) at (9, 0) {%
      \newcommand{\commands}{\textsf{cmds}}
      \begin{algorithmic}[1]
        \GlobalState $\commands$ \LineComment{set of messages $\msg{v_x, x}$}
        \Upon{receiving $\msg{v_x, x}$ from leader $l$}
          \State $\deps{} = \setst{v_y}{\msg{v_y, y} \in \commands \land
                                        \text{$x$, $y$ conflict}}$
          \State $\commands{} \gets \commands \cup \set{\msg{v_x, x}}$
          \State send $\msg{v_x, x, \deps{}}$ to $l$
        \EndUpon
      \end{algorithmic}
    };

    \node[proc,
          draw=\proposercolor,
          label={90:\textbf{\large Proposer}},
          below=of leader] (proposer) {%
      \begin{algorithmic}[1]
        \Upon{receiving $\msg{v_x, x, \deps{}_x}$ from leader}
          \State send $\msg{v_x, x, \deps{}_x}$ to consensus service
        \EndUpon

        \Upon{receiving $\msg{v_x, x, \deps{}_x}$ from consensus service}
          \State send $\msg{v_x, x, \deps{}_x}$ to replicas
        \EndUpon
      \end{algorithmic}
    };

    \node[proc,
          draw=\consensuscolor,
          label={90:\textbf{\large Consensus Service}},
          below=of depservice] (consensus) {%
      \begin{algorithmic}[1]
        \Upon{receiving $\msg{v_x, x, \deps{}_x}$ from proposer $p$}
          \State reach consensus on $(x', \deps{}_x')$ for vertex $v_x$
          \State send $\msg{v_x, x', \deps{}_x'}$ to $p$
        \EndUpon
      \end{algorithmic}
    };

    \node[proc,
          draw=\replicacolor,
          label={90:\textbf{\large Replica}},
          below=of consensus] (replica) {%
      \newcommand{\graph}{\textsf{graph}}
      \newcommand{\replicaIndex}{\textsf{replica index}}
      \newcommand{\numReplicas}{\textsf{num replicas}}
      \begin{algorithmic}[1]
        \GlobalState $\graph$ \LineComment{BPaxos graph of chosen vertices}
        \GlobalState $\numReplicas{}$ \LineComment{the number of replicas}
        \Upon{receiving $\msg{v_x, x, \deps{}_x}$ from proposer}
          \State add $\msg{v_x, x, \deps{}_x}$ to $\graph$
          \State execute every eligible vertex $v_y$
          \If{$\text{hash}(v_y)~\%~\numReplicas = \replicaIndex$}
            \State send result of executing $v_y$ back to client
          \EndIf
        \EndUpon
      \end{algorithmic}
    };

    \draw[comm]
      ($(clients.south west)!0.72!(clients.south)$)
      to node[commnum] {$1$}
      ($(leader.north)!0.5!(leader.north east)$);
    \draw[comm, black!85]
      ($(leader.north east)!0.5!(leader.east)$)
      to node[commnum] {$2$}
      (depservice);
    \draw[comm, black!85]
      (depservice.south west) to node[commnum] {$3$} (leader.east);
    \draw[comm, black!85]
      ($(leader.south)!0.5!(leader.south east)$)
      to node[commnum] {$4$}
      ($(proposer.north)!0.5!(proposer.north east)$);
    \draw[comm, black!80] (proposer.north east) to node[commnum] {$5$} (consensus.west);
    \draw[comm, black!80] (consensus) to node[commnum] {$6$} (proposer.north east);
    \draw[comm, black!80] (proposer) to node[commnum] {$7$} (replica);
    \draw[comm, black!80]
      (replica.east) to ++(0.5, 0)
                     to node[commnum]{$8$} ++(0, 8.75)
                     to (clients.east);
  \end{tikzpicture}

  \caption{BPaxos pseudocode}
  \figlabel{BPaxosPseudocode}
\end{figure*}
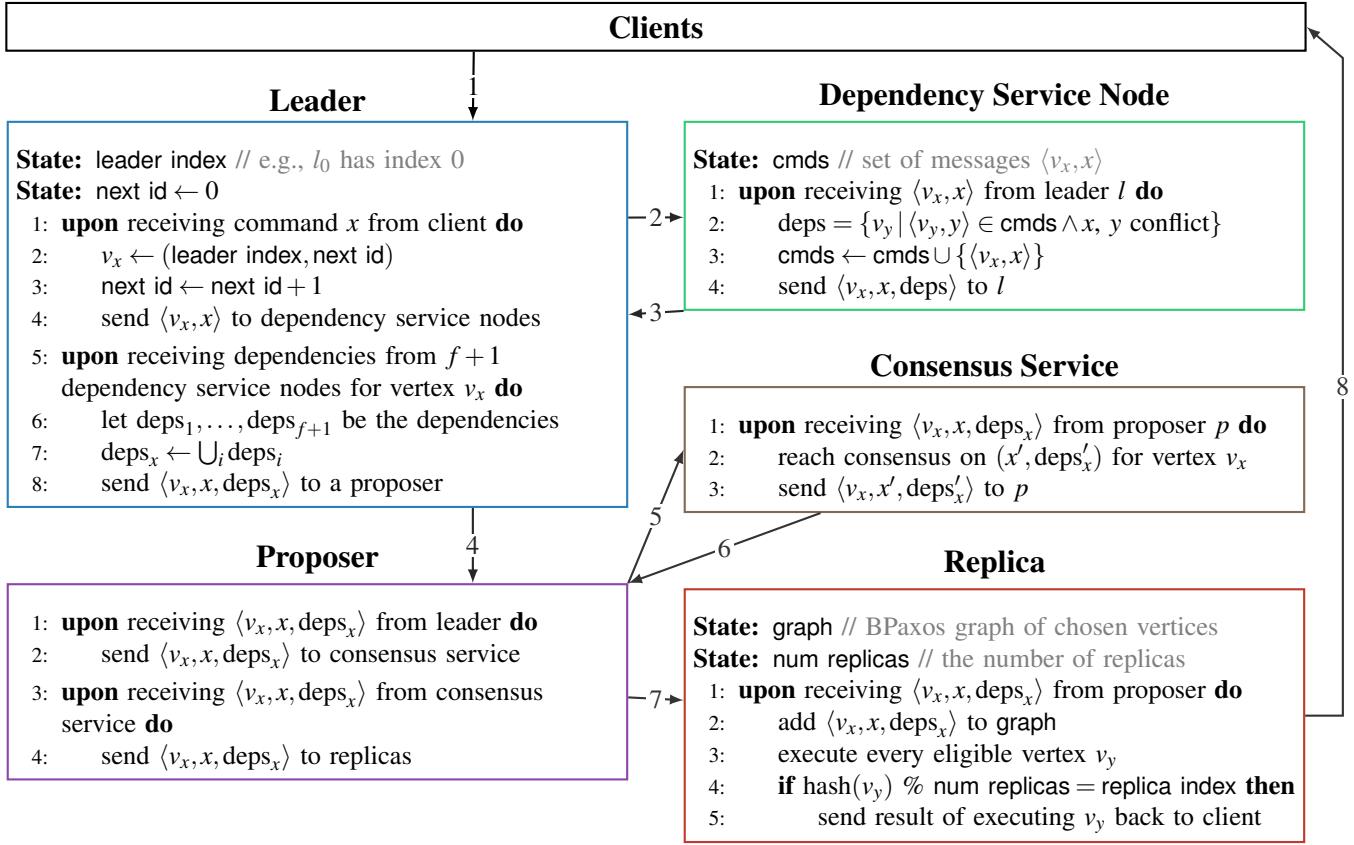}

\subsection{Dependency Service}
When the dependency service receives a message of the form $\msg{v_x, x}$, it
replies with a set of dependencies $\deps{}_x$ for $v_x$ using the message
$\msg{v_x, x, \deps{}_x}$.

Concretely, we implement the dependency service with $2f+1$ dependency service
nodes. Every dependency service node maintains a single piece of state,
\textsf{commands}. \textsf{commands} is the set of all the messages that the
dependency service node has received to date. When a dependency service node
receives message $\msg{v_x, x}$ from a leader, it computes the dependencies of
$v_x$ as the set of all vertices $v_y$ in $\textsf{commands}$ that contain a
command that conflicts with $x$:
\[
  \deps{} = \setst{v_y}{\msg{v_y, y} \in \textsf{commands}
                        ~\text{and $x$ and $y$ conflict}}.
\]
It then adds $\msg{v_x, x}$ to \textsf{commands} and sends $\msg{v_x, x,
\deps{}}$ back to the leader. When a leader sends a message $\msg{v_x, x}$ to
the dependency service, it sends it to every dependency service node. Upon
receiving $f + 1$ responses, $\set{\msg{v_x, x, \deps{}_1}, \ldots, \msg{v_x,
x, \deps{}_{f+1}}}$, the leader computes the final dependencies as
$\bigcup_{i=1}^{f+1} \deps{}_i$, the union of the computed dependencies.

The dependency service maintains the following invariant.

\begin{invariant}[\defword{dependency service invariant}]%
  \invlabel{DepServiceInvariant}
  If the dependency service produces responses $\msg{v_x, x, \deps{}_x}$ and
  $\msg{v_y, y, \deps{}_y}$ for conflicting commands $x$ and $y$, then $v_x \in
  \deps{}_y$ or $v_y \in \deps{}_x$ or both.
\end{invariant}

That is, the dependency service computes dependencies such that conflicting
commands depend on each other. Note that the dependency service invariant
(\invref{DepServiceInvariant}) is very similar to the dependency invariant
(\invref{KeyInvariant}). This is not an accident. Only dependencies computed by
the dependency service can be chosen, so the dependency service invariant
suffices to guarantee that the dependency invariant is maintained.

\begin{theorem}\thmlabel{DepServiceInvariant}
  The dependency service maintains \invref{DepServiceInvariant}.
\end{theorem}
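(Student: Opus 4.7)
The plan is to exploit quorum intersection over the $2f+1$ dependency service nodes, together with the fact that each node processes messages sequentially and accumulates them into its local \textsf{commands} set.

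First, I would observe that since the leader collects replies from $f+1$ out of $2f+1$ dependency service nodes before finalizing $\deps{}_x$, any two such quorums of size $f+1$ must intersect in at least one node. Thus, for the two conflicting commands $x$ and $y$, there exists some dependency service node $d$ that processed both $\msg{v_x, x}$ and $\msg{v_y, y}$ and whose replies were among the $f+1$ used by the respective leaders.

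Next, I would reason about the local order at $d$. Since $d$ processes messages atomically (each \textbf{upon} handler runs to completion), exactly one of $\msg{v_x, x}, \msg{v_y, y}$ was added to $d$'s \textsf{commands} set first. Without loss of generality, suppose $\msg{v_x, x}$ was processed first. Then when $d$ later handles $\msg{v_y, y}$, the message $\msg{v_x, x}$ is already in \textsf{commands}; because $x$ and $y$ conflict, the set computed by $d$ on line ``$\deps{} = \setst{v_y}{\msg{v_y,y}\in\commands\land x,y\text{ conflict}}$'' contains $v_x$. Call this set $\deps{}_i$. Since $d$'s reply is one of the $f+1$ replies the leader for $y$ unions together, and $v_x \in \deps{}_i \subseteq \bigcup_i \deps{}_i = \deps{}_y$, we conclude $v_x \in \deps{}_y$. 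The symmetric case (where $\msg{v_y, y}$ is processed at $d$ first) yields $v_y \in \deps{}_x$.

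I don't anticipate a genuinely hard step here; the argument is a standard quorum-intersection plus sequential-processing combination. The only subtlety worth stating explicitly is that we rely on vertex ids being globally unique (guaranteed by the leader-index/counter scheme in the pseudocode) so that ``$v_x \in \deps{}_y$'' is unambiguous, and on the atomicity of the \textbf{upon} handler at each dependency service node so that we can cleanly speak of a ``first'' of the two messages at $d$.
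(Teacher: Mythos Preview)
Your proposal is correct and follows essentially the same argument as the paper: quorum intersection of the two $f+1$ sets among $2f+1$ dependency service nodes yields a common node $d$, and whichever of $\msg{v_x,x}$, $\msg{v_y,y}$ arrives at $d$ second picks up the other's vertex id as a dependency, which then survives into the leader's union. The extra remarks you make about handler atomicity and vertex-id uniqueness are sound but incidental; the paper's proof is otherwise identical in structure.
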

\begin{proof}
  Assume the dependency service produces responses $\msg{v_x, x, \deps{}_x}$ and
  $\msg{v_y, y, \deps{}_y}$ for conflicting commands $x$ and $y$. We want to
  show that $v_x \in \deps{}_y$ or $v_y \in \deps{}_x$ or both. $\deps{}_x$ is
  the union of dependencies computed by some set $Q_x$ of $f + 1$ dependency
  service nodes. Similarly, $\deps{}_y$ is the union of dependencies computed
  by some set $Q_y$ of $f + 1$ dependency service nodes. Any two sets of $f +
  1$ nodes must intersect ($f+1$ is a majority of $2f+1$). Consider a
  dependency service node $d$ in the intersection of $Q_x$ and $Q_y$. $d$
  received both $\msg{v_x, x}$ and $\msg{v_y, y}$. Without loss of generality,
  assume it received $\msg{v_y, y}$ second. Then, when $d$ received $\msg{v_y,
  y}$, $\msg{v_x, x}$ was already in its \textsf{commands}, so it must have
  included $v_x$ in its computed dependencies for $v_y$. $\deps{}_y$ is a union
  of dependencies that includes the dependencies computed by $d$. Thus, $v_x
  \in \deps{}_y$. This is illustrated in \figref{DepServiceProof}.
\end{proof}

{\tikzstyle{proc}=[draw, thick, circle]
\tikzstyle{quorum}=[draw, thick, rounded corners]
\tikzstyle{pointer}=[thick, -latex]

\begin{figure}[ht]
  \centering
  \begin{tikzpicture}
    \node[proc] (d0) at (0, 0) {$d_0$};
    \node[proc] (d1) at (1, 0) {$d_1$};
    \node[proc] (d2) at (2, 0) {$d_2$};
    \node (x) at (0.5, 0.8) {\textcolor{flatred}{$v_x$} quorum $Q_x$};
    \node (y) at (1.5, -0.8) {\textcolor{flatblue}{$v_y$} quorum $Q_y$};

    \draw[quorum, draw=flatred]
      ($(d0.north west) + (-0.2, 0.3)$) rectangle
      ($(d1.south east) + (0.2, -0.2)$);
    \draw[quorum, draw=flatblue]
      ($(d1.north west) + (-0.2, 0.2)$) rectangle
      ($(d2.south east) + (0.2, -0.3)$);

    \node[inner sep=1pt,
          anchor=south west] (description) at ($(d1) + (0.75, 0.75)$) {%
      $d_1$ receives $v_x$ and $v_y$
    };
    \draw[pointer] (description.south west) to (d1);
  \end{tikzpicture}
  \caption{An illustration of the proof of \thmref{DepServiceInvariant}.}%
  \figlabel{DepServiceProof}
\end{figure}}

Note that if the dependency service produces responses $\msg{v_x, x,
\deps{}_x}$ and $\msg{v_y, y, \deps{}_y}$ for conflicting commands $x$ and $y$,
it may include $v_x \in \deps{}_y$ \emph{and} $v_y \in \deps{}_x$. For example,
if dependency service node $d_1$ receives $x$ then $y$ while dependency service
node $d_2$ receives $y$ then $x$, then dependencies formed from $d_1$ and $d_2$
will have $v_x$ and $v_y$ in each other's dependencies. This is the reason why
BPaxos graphs may develop cycles.

Also note that the dependency service is an independent module within BPaxos.
The dependency service is unaware of consensus, or BPaxos graphs, or state
machines, or any other detail outside of the dependency service. The dependency
service can be completely understood in isolation. In contrast, dependency
computation in EPaxos and Caesar is tightly coupled with the rest of the
protocol. For example, in Caesar, every command is assigned a timestamp. If a
node receives two commands out of timestamp order, it must first wait to see if
the higher timestamp command gets chosen with a dependency on the lower
timstamp command before it is able to compute the lower timestamp command's
dependencies. This coupling prevents us from understanding dependency
computation in isolation.

\subsection{Leaders}
When a leader receives a command $x$ from a client, it assigns $x$ a globally
unique vertex id $v_x$. The mechanism by which leaders generate unique ids is
unimportant. You can use any mechanism you would like as long as ids are globally
unique. In our implementation, a vertex id is a tuple of the leader's index and
a monotonically increasing id beginning at $0$. For example, leader $2$
generates vertex ids $(2, 0), (2, 1), (2, 2)$, and so on.

After generating a vertex id $v_x$, the leader sends $\msg{v_x, x}$ to all
dependency service nodes, aggregates the dependencies from $f+1$ of them, and
forwards the dependencies to a proposer.

\subsection{Proposers and Consensus Service}
When a proposer receives a message $\msg{v_x, x, \deps{}_x}$, it proposes to
the consensus service that the value $(x, \deps{}_x)$ be chosen for vertex
$v_x$. The consensus service implements one instance of consensus for every
vertex, and eventually informs the proposer of the value $(x', \deps{}_x')$
that was chosen for vertex $v_x$. In the normal case, $(x', \deps{}_x')$ is
equal to $(x, \deps{}_x)$, but the consensus service is free to choose any
value proposed for vertex $v_x$.

You can implement the consensus service with any consensus protocol that you would
like. In our implementation of BPaxos, BPaxos proposers are Paxos proposers,
and the consensus service is implemented as $2f+1$ Paxos acceptors. We
implement Paxos with the standard optimization that phase 1 of the protocol can
be skipped in round $0$ (a.k.a.\ ballot $0$). Doing so, and partitioning vertex
ids uniformly across proposers, the proposers can get a value chosen in one
round trip to the acceptors (in the common case). This optimization is very
similar to the one done in MultiPaxos.

Again, note that the consensus service is an independent module that we can
understand in isolation. The consensus service implements consensus, and that
is it. It is unaware of dependencies, graphs, or any other detail of the
protocol. Moreover, note that the consensus service is not specialized at all
to BPaxos. We are able to use the Paxos protocol without modification. This
lets us avoid having to prove a specialized implementation of consensus
correct.

\subsection{Replicas}
Every BPaxos replica maintains a BPaxos graph and an instance of a state
machine. Every state machine begins in the same initial state. Upon receiving a
message $\msg{v_x, x, \deps{}_x}$ from a proposer, a replica adds vertex $v_x$
to its graph with command $x$ and with edges to $\deps{}_x$. As discussed
earlier, the replicas execute their graphs in reverse topological order, one
component at a time. When a replica is ready to execute a command $x$, it
passes it to the state machine. The state machine transitions to a new state
and produces some output $o$. One replica then returns $o$ to the client that
initially proposed $x$. In particular, given $n$ replicas, $r_i$ returns
outputs to clients for vertices $v_x$ where $\text{hash}(v_x) \% n = i$.

\subsection{Summary}
In summary, BPaxos is composed of five modules: leaders, dependency service
nodes, proposers, a consensus service, and replicas. Clients propose commands;
leaders assign unique ids to commands; the dependency service computes
dependencies (ensuring that conflicting commands depend on each other); the
proposers and consensus service reach consensus on every vertex; and replicas
execute commands.

\subsection{Fault Tolerance and Recovery}
BPaxos can tolerate up to $f$ failures. By deploying $f+1$ leaders, proposers,
and replicas, BPaxos guarantees that at least one of each is operational after
$f$ failures. The dependency service deploys $2f+1$ dependency service nodes,
ensuring that at a quorum of $f+1$ nodes is available despite $f$ failures. The
consensus service tolerates $f$ failures by assumption. In our implementation,
we use $2f+1$ Paxos acceptors, as is standard.

However, despite this, failures can still lead to liveness violations if we are
not careful. A replica executes vertex $v_x$ only after it has executed $v_x$'s
dependencies. If one of $v_x$'s dependencies has not yet been chosen, then the
execution of $v_x$ is delayed. For example, in \figref{ExampleBPaxosExecution},
the execution of $v_2$ is delayed until after $v_1$ has been chosen and
executed.

If a vertex $v_x$ depends on a vertex $v_y$ that remains forever unchosen, then
$v_x$ is never executed. This situation is rare, but possible in the event of
failures. For example, if two leaders $l_x$ and $l_y$ concurrently send
commands $x$ and $y$ in vertices $v_x$ and $v_y$ to the dependency service, and
if $l_y$ then crashes, it is possible that $v_x$ gets chosen with a dependency
on $v_y$, but $v_y$ remains forever unchosen.

Dealing with these sorts of failure scenarios to ensure that every command
eventually gets chosen is called \defword{recovery}. Every state machine
replication protocol has to implement some form of recovery, and for many
protocols (though not all protocols), recovery is its most complicated part.

\newcommand{\noop}{\text{noop}}
Fortunately, BPaxos' modularity leads to a very simple recovery protocol. When
a replica notices that a vertex $v_x$ has been blocked waiting for another
vertex $v_y$ for more than some configurable amount of time, the replica
contacts the consensus service and proposes that a no-operation command $\noop$
be chosen for vertex $v_y$ with no dependencies. $\noop$ is a special command
that does not affect the state machine and does not conflict with any other
command. Eventually, the consensus protocol returns the chosen value to the
replica, and the execution of $v_x$ can proceed.

}
{\section{Disaggregating and Scaling}
BPaxos' modular design leads to high throughput in two ways: disaggregation and
scaling.

\subsection{Identifying Bottlenecks}
The throughput of a protocol is determined by its bottleneck. Before we discuss
BPaxos' throughput, we discuss how to identify the bottleneck of a protocol.
Identifying a bottleneck with complete accuracy is hard. Protocol bottlenecks
are affected by many factors including CPU speeds, network bandwidth, message
sizes, workload characteristics, and so on. To make bottleneck analysis
tractable, we make a major simplifying assumption. The assumption is best
explained by way of an example.

{\input{figures/common.tex}

\tikzstyle{proc}=[draw, circle, thick]
\tikzstyle{proclabel}=[inner sep=0pt]
\tikzstyle{fakeproclabel}=[inner sep=0pt, white]
\tikzstyle{comm}=[-latex, thick]
\tikzstyle{commnum}=[font=\large, fill=flatyellowalt!50, inner sep=3pt,
                     rounded corners]
\newcommand{\acceptorcolor}{flatorange}

\begin{figure}[ht]
  \centering
  \begin{tikzpicture}[scale=0.75]
    \node[proc] (c) at (0, 2) {$c$};
    \node[proc,
          fill=\proposercolor!25,
          label={[commnum, label distance=0.3cm]90:\bm{$2N + 2$}}]
          (p0) at (2, 2) {$p_0$};
    \node[proc,
          fill=\proposercolor!25]
          (p1) at (4, 2) {$p_1$};
    \node[proc,
          fill=\acceptorcolor!25,
          label={[commnum]-90:\bm{$2$}}]
          (a0) at (0, 0) {$a_0$};
    \node[proc,
          fill=\acceptorcolor!25,
          label={[commnum]-90:\bm{$2$}}]
          (a1) at (2, 0) {$a_1$};
    \node[proc,
          fill=\acceptorcolor!25,
          label={[commnum]-90:\bm{$2$}}]
          (a2) at (4, 0) {$a_2$};

    \crown{(p0.north)++(0, -0.15)}{1}{0.5}

    \node[proclabel, anchor=east] at (-0.5, 2) {Client};
    \node[fakeproclabel, anchor=west] (ps) at (5, 2) {Proposers};
    \halffill{ps}{\proposercolor!25}
    \node[proclabel, anchor=west] (ps) at (5, 2) {Proposers};
    \node[fakeproclabel, anchor=west] (as) at (5, 0) {$N$ Acceptors};
    \halffill{as}{\acceptorcolor!25}
    \node[proclabel, anchor=west] (as) at (5, 0) {$N$ Acceptors};

    \draw[comm, bend left=10] (c) to (p0);
    \draw[comm, bend left=10] (p0) to (a0);
    \draw[comm, bend left=10] (p0) to (a1);
    \draw[comm, bend left=10] (p0) to (a2);
    \draw[comm, bend left=10] (a0) to (p0);
    \draw[comm, bend left=10] (a1) to (p0);
    \draw[comm, bend left=10] (a2) to (p0);
    \draw[comm, bend left=10] (p0) to (c);
  \end{tikzpicture}
  \caption{%
    MultiPaxos' throughput bottleneck. Every node is annotated with the number
    of messages that it sends and receives. MultiPaxos' throughput is
    proportional to $\frac{1}{2N+2}$.
  }
  \figlabel{MultiPaxosBottleneck}
\end{figure}
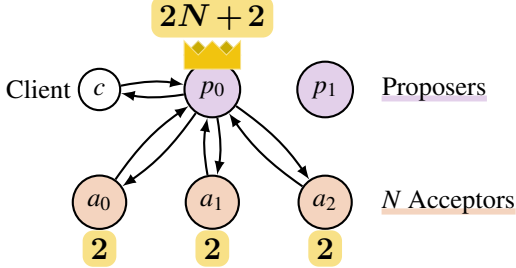}

Consider the execution of MultiPaxos shown in \figref{MultiPaxosBottleneck} in
which a client proposes a command $x$. The execution involves $N \geq 2f+1$
acceptors. We have annotated each node with the number of messages it sends and
receives in the process of handling $x$. The leader $p_0$ processes $2N+2$
messages, and every acceptor processes $2$ messages. Our major assumption is
that the time required for each node to process command $x$ is directly
proportional to the number of messages that it processes. Thus, the leader
takes time proportional to $2N+2$, and the acceptors take time proportional to
$2$. This means that the leader is the bottleneck, and the protocol's
throughput is directly proportional to $\frac{1}{2N+2}$, the inverse of the
time required by the bottleneck component.

While our assumption is simplistic, we will see in \secref{Evaluation} that
empirically it is accurate enough for us to identify the actual bottleneck of
protocols in practice. Now, we turn our attention to BPaxos. Consider the
execution of BPaxos shown in \figref{BPaxosBottleneck}. We have $N \geq 2f+1$
dependency service nodes, $N$ acceptors, $L \geq f+1$ leaders, $L$ proposers,
and $R \geq f+1$ replicas\footnote{We can have a different number of leaders
and proposers, but letting them be equal simplifies the example.}.

{\input{figures/common.tex}

\tikzstyle{proc}=[draw, circle, thick, inner sep=2pt]
\tikzstyle{leader}=[proc, fill=\leadercolor!25]
\tikzstyle{proposer}=[proc, fill=\proposercolor!25]
\tikzstyle{replica}=[proc, fill=\replicacolor!25]
\tikzstyle{client}=[proc, fill=\clientcolor!25]
\tikzstyle{proclabel}=[inner sep=0pt, darkgray]
\tikzstyle{fakeproclabel}=[inner sep=0pt, white]
\tikzstyle{comm}=[-latex, thick]
\tikzstyle{commnum}=[font=\large, fill=flatyellowalt!50, inner sep=3pt,
                     rounded corners]
\tikzstyle{service}=[draw, rounded corners, align=center, thick]
\tikzstyle{depnode}=[proc, fill=\depservicecolor!25]
\tikzstyle{acceptor}=[proc, fill=\consensuscolor!25]
\tikzstyle{module}=[draw, thick, flatgray, rounded corners]

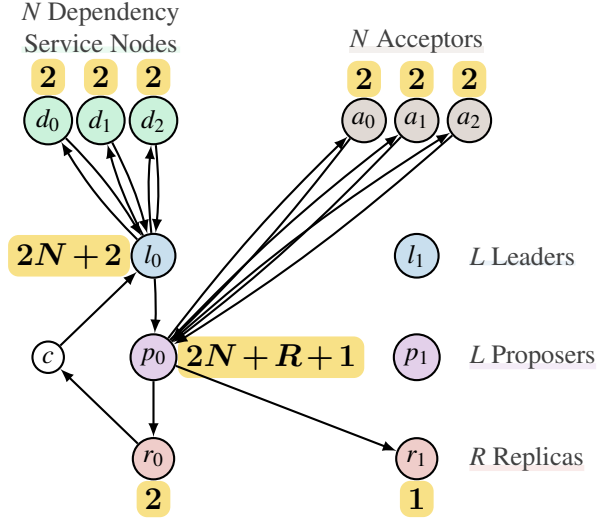
\begin{figure}[t]
  \centering

  \begin{tikzpicture}[yscale=0.9, xscale=0.70]
    \node[client] (c) at (0, 1.5) {$c$};
    \node[leader, label={[commnum]180:\bm{$2N+2$}}] (l0) at (2, 3) {$l_0$};
    \node[leader] (l1) at (7, 3) {$l_1$};
    \node[proposer, label={[commnum]0:\bm{$2N+R+1$}}] (p0) at (2, 1.5) {$p_0$};
    \node[proposer] (p1) at (7, 1.5) {$p_1$};
    \node[replica, label={[commnum]270:\bm{$2$}}] (r0) at (2, 0) {$r_0$};
    \node[replica, label={[commnum]270:\bm{$1$}}] (r1) at (7, 0) {$r_1$};
    \node[depnode, label={[commnum]90:\bm{$2$}}] (d0) at (0, 5) {$d_0$};
    \node[depnode, label={[commnum]90:\bm{$2$}}] (d1) at (1, 5) {$d_1$};
    \node[depnode, label={[commnum]90:\bm{$2$}}] (d2) at (2, 5) {$d_2$};
    \node[acceptor, label={[commnum]90:\bm{$2$}}] (a0) at (6, 5) {$a_0$};
    \node[acceptor, label={[commnum]90:\bm{$2$}}] (a1) at (7, 5) {$a_1$};
    \node[acceptor, label={[commnum]90:\bm{$2$}}] (a2) at (8, 5) {$a_2$};

    \draw[comm] (c) to (l0);
    \draw[comm, bend left=7] (l0) to (d0);
    \draw[comm, bend left=7] (l0) to (d1);
    \draw[comm, bend left=7] (l0) to (d2);
    \draw[comm, bend left=7] (d0) to (l0);
    \draw[comm, bend left=7] (d1) to (l0);
    \draw[comm, bend left=7] (d2) to (l0);
    \draw[comm, bend left=4] (l0) to (p0);
    \draw[comm, bend left=4] (p0) to (a0);
    \draw[comm, bend left=4] (p0) to (a1);
    \draw[comm, bend left=4] (p0) to (a2);
    \draw[comm, bend left=4] (a0) to (p0);
    \draw[comm, bend left=4] (a1) to (p0);
    \draw[comm, bend left=4] (a2) to (p0);
    \draw[comm] (p0) to (r0);
    \draw[comm] (p0) to (r1);
    \draw[comm] (r0) to (c);

    \node[draw, fakeproclabel, anchor=south, align=center] (depnodes) at (1, 6)
      {$N$ Dependency\\Service Nodes};
    \quarterfill{depnodes}{\depservicecolor!10}
    \node[proclabel, align=center] at (depnodes) {$N$ Dependency\\Service Nodes};

    \node[draw, fakeproclabel, anchor=south] (acceptors) at (7, 6) {$N$ Acceptors};
    \halffill{acceptors}{\consensuscolor!10}
    \node[proclabel] at (acceptors) {$N$ Acceptors};

    \node[fakeproclabel, anchor=west] (leaders) at (8, 3) {$L$ Leaders};
    \halffill{leaders}{\leadercolor!10}
    \node[proclabel] at (leaders) {$L$ Leaders};

    \node[fakeproclabel, anchor=west] (proposers) at (8, 1.5) {$L$ Proposers};
    \halffill{proposers}{\proposercolor!10}
    \node[proclabel] at (proposers) {$L$ Proposers};

    \node[fakeproclabel, anchor=west] (replicas) at (8, 0) {$R$ Replicas};
    \halffill{replicas}{\replicacolor!10}
    \node[proclabel] at (replicas) {$R$ Replicas};
  \end{tikzpicture}

  \caption{%
    BPaxos' throughput bottleneck. Every node is annotated with the number of
    messages that it sends and receives. BPaxos' throughput is proportional to
    $\frac{L}{2N+R+1}$.
  }
  \figlabel{BPaxosBottleneck}
\end{figure}}

Again, we annotate each node with the number of messages it processes to handle
the client's command. The dependency service nodes and acceptors process two
messages each. The replicas process either one or two messages---depending on
whether they are returning a response to the client---for an average of
$1+\frac{1}{R}$. The leaders and proposers process significantly more messages,
$2N+2$ and $2N+R+1$ messages respectively. Thus, the throughput through a
\emph{single} leader and proposer is proportional to $\frac{1}{2N+R+1}$. Unlike
MultiPaxos though, BPaxos does not have a single leader. All $L$ of the leaders
and proposers execute concurrently, with client commands divided amongst them.
With $L$ leaders and proposers, BPaxos' throughput is proportional to
$\frac{L}{2N+R+1}$.

\subsection{Disaggregation}
Many state machine replication protocols pack multiple logical nodes onto a
single physical node. We could do something similar. We could deploy $N=L=R$
dependency service nodes, acceptors, leaders, proposers, and replicas across
$N$ physical ``super nodes'', with one of each component co-located on a single
physical machine. This would reduce the latency of the protocol by two network
delays and open the door for optimizations that could reduce the latency even
further.

However, aggregating logical components together would worsen our bottleneck.
Now, for a given command, a super node would have to process the messages of a
dependency service node, an acceptor, a leader, a proposer, and a replica. With
the bottleneck component processing more messages per command, the throughput
of the protocol decreases. Disaggregating the components allows for pipeline
parallelism in which load is more evenly balanced across the components.

\subsection{Scaling}
Scaling is a classic systems technique that is used to increase the throughput
of a system. However, to date, consensus protocols have not been able to take
full advantage of scaling. Conventional wisdom for replication protocols
suggests that we use as few nodes as possible. Returning to
\figref{MultiPaxosBottleneck}, we see this conventional wisdom in action. The
throughput of MultiPaxos is proportional to $\frac{1}{2N+2}$. Adding more
proposers does not do anything, and adding more acceptors (i.e.\ increasing
$N$) \emph{lowers} the throughput.

BPaxos revises conventional wisdom and notes that while some components are
hard or impossible to scale (e.g., acceptors), other components scale
trivially. Serendipitously, the components that are easy to scale turn out to
be the same components that are a throughput bottleneck.

More specifically, we learned from \figref{BPaxosBottleneck} that BPaxos'
throughput is proportional to $\frac{L}{2N+R+1}$ with the $L$ leaders and
proposers being the bottleneck. To increase BPaxos' throughput, we simply
increase $L$. We can increase the number of leaders and proposers until they
are no longer the bottleneck. This pushes the bottleneck to either the
dependency service nodes, the acceptors, or the replicas. Fortunately, these
nodes only process at most two messages per command. This is equivalent to an
unreplicated state machine which must at least receive and execute a command
and reply with the result. Thus, we have effectively shrunk the throughput
bottleneck to its limit.

Note that we are able to perform this straightforward scaling because BPaxos'
components are modular. When we co-locate components together, $L=N=R$,
and it is impossible for us to increase $L$ (which increases throughput) without
increasing $N$ and $R$ (which decreases throughput). Modularity allows us
to scale each component independently.
}
{\section{Practical Considerations}\seclabel{PracticalConsiderations}

\subsection{Ensuring Exactly Once Semantics}
If a client proposes a command to a state machine replication protocol but
does not hear back quickly enough, it resends the command to the protocol to
make sure that the command eventually gets executed. Thus, a replication
protocol might \emph{receive} a command more than once, but it has to guarantee
that it never \emph{executes} the command more than once. Executing a command
more than once would violate exactly once semantics.

Non-generalized protocols like Paxos~\cite{van2015paxos}, Viewstamped
Replication~\cite{liskov2012viewstamped}, and Raft~\cite{ongaro2014search} all
employ the following technique to avoid executing a command more than once.
First, before a client proposes a command to a replication protocol, it
annotates the command with a monotonically increasing integer-valued id.
Moreover, clients only send one command at a time, waiting to receive a
response from one command before sending another. Second, every replica
maintains a \defword{client table}, like the one illustrated below. A client
table has one entry per client. The entry for a client records the largest id
of any command that the replica has executed for that client, along with the
result of executing the command with that id. A replica only executes commands
for a client if it has a larger id than the one recorded in the client table.
If it receives a command with the same id as the one in the client table, it
replies with the recorded output instead of executing the command a second
time.

\begin{center}
  \begin{tabular}{|c|c|c|}
    \hline
    \textbf{Client} & \textbf{Id} & \textbf{Output} \\\hline
    $10.31.14.41$   & $2$         & ``foo'' \\
    $10.54.13.123$  & $1$         & ``bar'' \\\hline
  \end{tabular}
\end{center}

Naively applying this same trick to BPaxos (or any generalized protocol) is
unsafe. For example, imagine a client issues command $x$ with id $1$. The
command gets chosen and is executed by replica 1. Then, the client issues
non-conflicting command $y$ with id $2$. The command gets chosen and is
executed by replica 2. Because $y$ has a larger id than $x$, replica 2 will
never execute $x$.

To fix this bug, a replica must record the ids of \emph{all} commands that it
has executed for a client, along with the output corresponding to the largest
of these ids. Replicas only execute commands they have not previously executed,
and relay the cached output if they receive a command with the corresponding
id.


\subsection{Dependency Compaction}
Upon receiving a command $x$ in vertex $v_x$, a dependency service node returns
the set of all previously received vertices with commands that conflict with
$x$. Over time, as the dependency service receives more and more commands,
these dependency sets get bigger and bigger. As the dependency sets get bigger,
BPaxos' throughput decreases because more time is spent sending these large
dependency sets, and less time is spent doing useful work.

To combat this, a BPaxos dependency service node has to compact dependencies in
some way. Recall that BPaxos leader $i$ creates vertex ids $(i, 0), (i, 1), (i,
2)$, and so on. Thus, vertex ids across all the leaders form a two-dimensional
array with one column for every leader index and one row for every
monotonically increasing id.

{\newlength{\depcompactioninnersep}
\setlength{\depcompactioninnersep}{4pt}
\newlength{\depcompactionlinewidth}
\setlength{\depcompactionlinewidth}{1pt}
\newlength{\depcompactionwidth}
\setlength{\depcompactionwidth}{\widthof{$X$}+2\depcompactioninnersep}
\newcommand{\logindexcolor}{flatred}

\tikzstyle{entry}=[draw,
                   inner sep=\depcompactioninnersep,
                   line width=\depcompactionlinewidth,
                   minimum height=\depcompactionwidth,
                   minimum width=\depcompactionwidth]
\tikzstyle{index}=[color=\logindexcolor,
                   inner sep=\depcompactioninnersep,
                   line width=\depcompactionlinewidth,
                   minimum height=\depcompactionwidth,
                   minimum width=\depcompactionwidth]
\tikzstyle{deps}=[fill=flatblue!20]
\tikzstyle{dep}=[thick, -latex]

\begin{figure}[ht]
  \centering
  \begin{tikzpicture}
    \node[entry, deps] (00) at (0, 0) {$b$};
    \node[entry, deps, right=-\depcompactionlinewidth of 00] (10) {$d$};
    \node[entry, deps, right=-\depcompactionlinewidth of 10] (20) {};
    \node[entry, deps, above=-\depcompactionlinewidth of 00] (01) {$a$};
    \node[entry, deps, right=-\depcompactionlinewidth of 01] (11) {};
    \node[entry, deps, right=-\depcompactionlinewidth of 11] (21) {$e$};
    \node[entry, above=-\depcompactionlinewidth of 01] (02) {};
    \node[entry, deps, right=-\depcompactionlinewidth of 02] (12) {$c$};
    \node[entry, right=-\depcompactionlinewidth of 12] (22) {};
    \node[entry, above=-\depcompactionlinewidth of 02] (03) {};
    \node[entry, right=-\depcompactionlinewidth of 03] (13) {};
    \node[entry, right=-\depcompactionlinewidth of 13] (23) {};

    \node[index, below=-\depcompactionlinewidth of 00] (c0) {$0$};
    \node[index, below=-\depcompactionlinewidth of 10] (c1) {$1$};
    \node[index, below=-\depcompactionlinewidth of 20] (c2) {$2$};
    \node[index, left=-\depcompactionlinewidth of 00] (r0) {$0$};
    \node[index, left=-\depcompactionlinewidth of 01] (r1) {$1$};
    \node[index, left=-\depcompactionlinewidth of 02] (r2) {$2$};
    \node[index, left=-\depcompactionlinewidth of 03] (r3) {$3$};

    \node at ($(c1) + (0, -\depcompactionwidth)$) {leader index};
    \node at ($(r1)!0.5!(r2) + (-\depcompactionwidth, 0)$) {id};
  \end{tikzpicture}
  \caption{An example of dependency compaction}
  \figlabel{DependencyCompaction}
\end{figure}
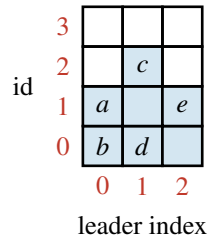
}

For example, consider a dependency service node that has received commands $a$,
$b$, $c$, $d$, and $e$ in vertices $(0, 1)$, $(0, 0)$, $(1, 2)$, $(1, 0)$, and
$(2, 1)$ as shown in \figref{DependencyCompaction}. \emph{Without dependency
compaction}, if the dependency service node receives a command that conflicts
with commands $a$, $b$, $c$, $d$, and $e$, it would return the vertex ids of
these five commands. In our example, the dependency service node returns only
five dependencies, but in a real deployment, the node could return hundreds of
thousands of dependencies.

\emph{With dependency compaction} on the other hand, the dependency service
node instead artificially adds more dependencies. In particular, for every
leader $i$, it computes the largest id $j$ for which a dependency $(i, j)$
exists. Then, it adds $\setst{(i, k)}{k \leq j}$ to the dependencies. In other
words, it finds the largest dependency in each column and then adds all of the
vertex ids below it as dependencies. In \figref{DependencyCompaction}, the
inflated set of dependencies is highlighted in blue. Even though more
dependencies have been added, the set of inflated dependencies can be
represented more compactly, with a single integer for every leader (i.e., the
id of the largest command for that leader). Thus, every BPaxos dependency set
can be succinctly represented with $N$ integers (for $N$ leaders).

}
{\section{Evaluation}\seclabel{Evaluation}

\subsection{Latency and Throughput}
{\begin{figure*}[ht]
  \centering
  \begin{subfigure}[c]{0.36\textwidth}
    \centering
    \includegraphics[width=\textwidth]{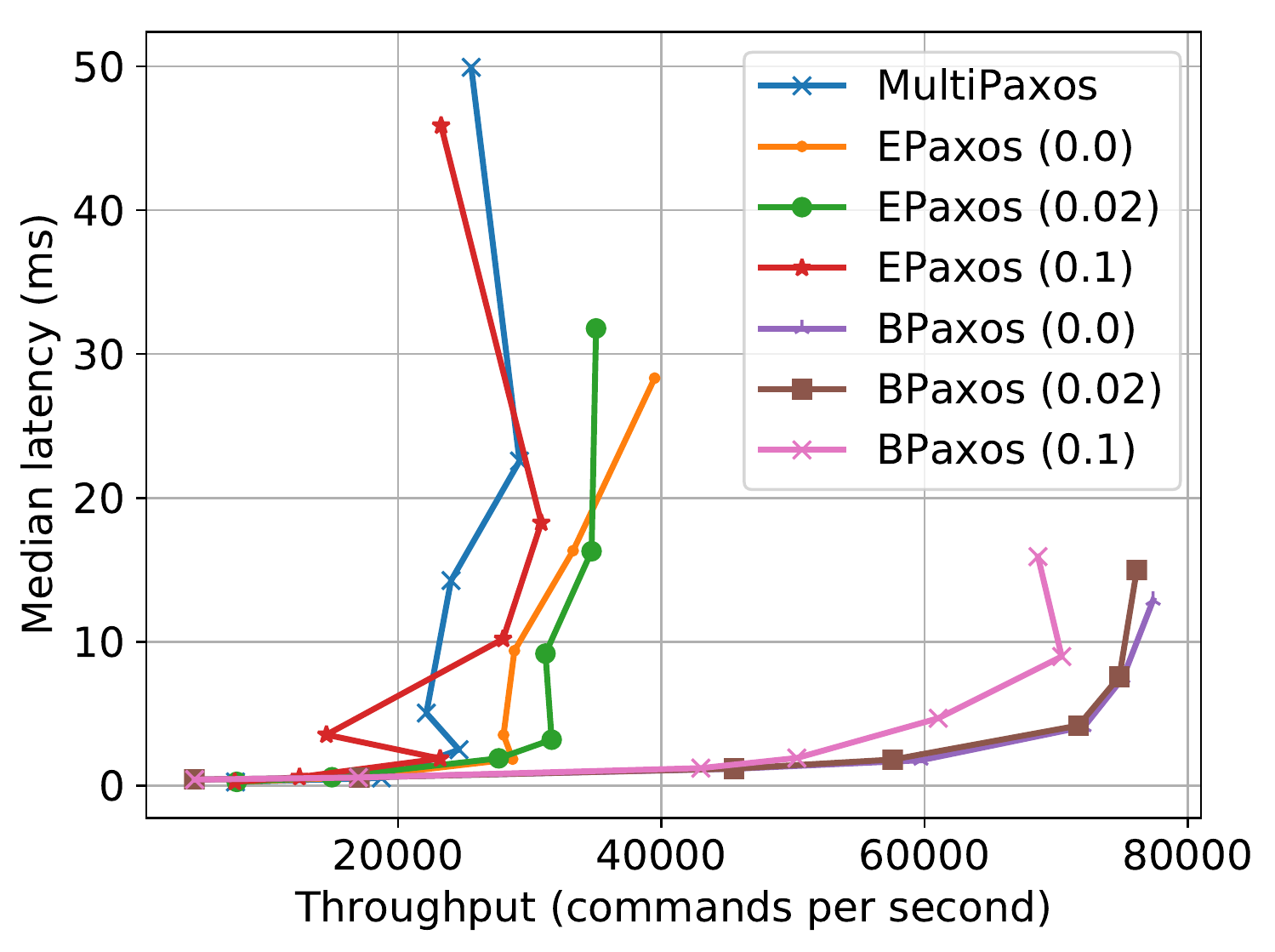}
    \caption{
      Latency-throughput curves for Multipaxos, EPaxos, and BPaxos. EPaxos and
      BPaxos are run with 0\%, 2\% and 10\% conflict rates. Here, $f = 1$.
    }\figlabel{EvalLtF1}
  \end{subfigure}
  \begin{subfigure}[c]{0.36\textwidth}
    \centering
    \includegraphics[width=\textwidth]{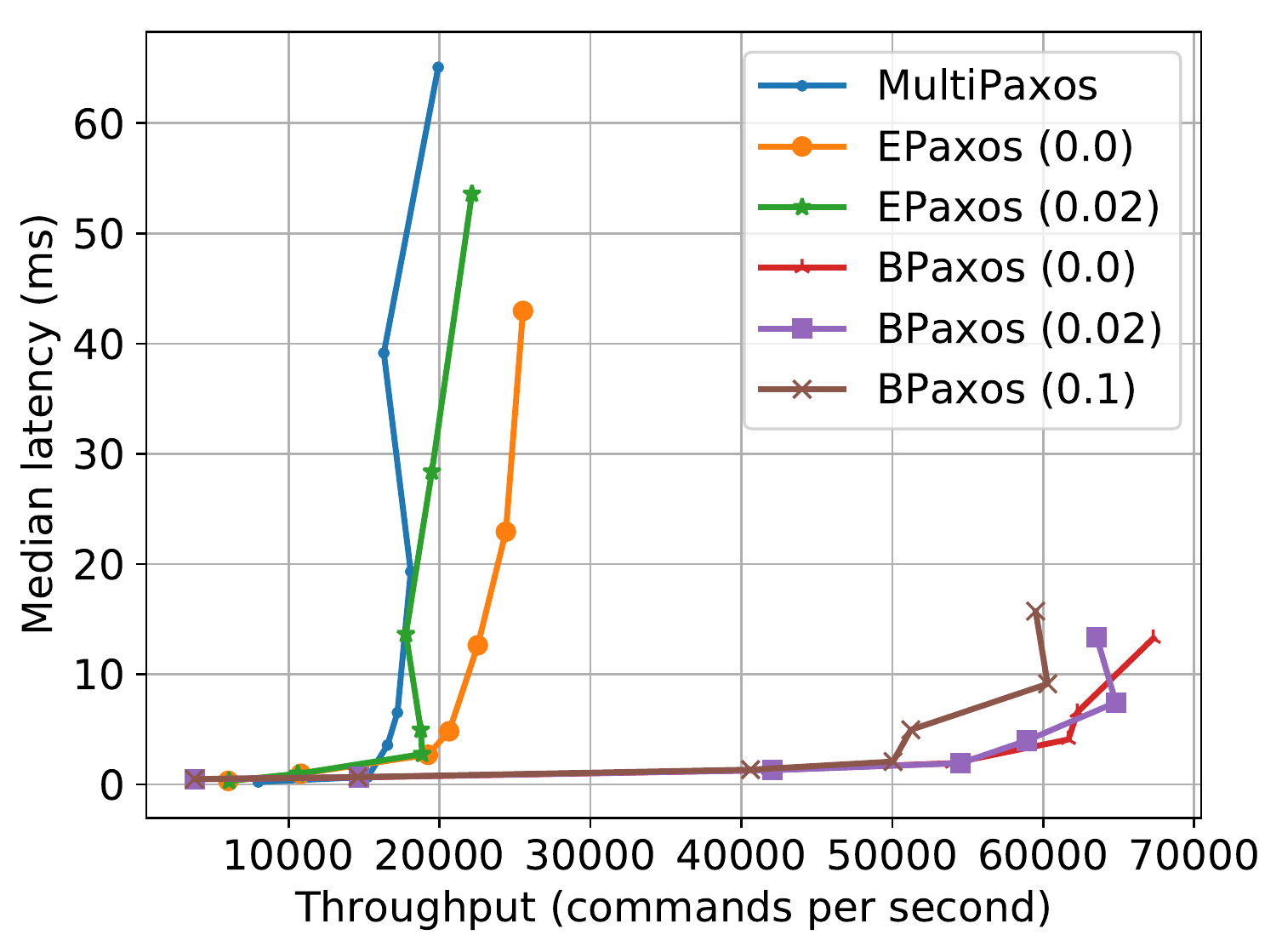}
    \caption{The same as \figref{EvalLtF1} but with $f=2$.}%
    \figlabel{EvalLtF2}
  \end{subfigure}
  \begin{subfigure}[c]{0.23\textwidth}
    \centering
    \small
    \begin{tabular}{lccc}
      \toprule
      \multicolumn{1}{c}{Protocol} &
      \multicolumn{3}{c}{Number of clients} \\
                    & 1    & 10   & 50 \\\midrule
      Multipaxos    & 0.24 & 0.52 & 2.49 \\
      EPaxos (0.0)  & 0.25 & 0.56 & 1.83 \\
      EPaxos (0.02) & 0.25 & 0.57 & 1.89 \\
      EPaxos (0.1)  & 0.25 & 0.58 & 1.87 \\
      BPaxos (0.0)  & 0.41 & 0.56 & 1.16 \\
      BPaxos (0.02) & 0.41 & 0.56 & 1.17 \\
      BPaxos (0.1)  & 0.41 & 0.55 & 1.21 \\
      \bottomrule
    \end{tabular}
    \caption{%
      Median latency values (ms) from \figref{EvalLtF1}.
    }\figlabel{EvalLtTable}
  \end{subfigure}
  \caption{%
    Latency and throughput of Multipaxos, EPaxos, and BPaxos for varying number
    of clients, conflict rates, and values of $f$. Data is shown for $1$, $10$,
    $50$, $100$, $300$, $600$, and $1200$ clients.
  }\figlabel{EvalLt}
\end{figure*}
}

\paragraph{Experiment Description.}
We implemented MultiPaxos, EPaxos\footnote{%
  Note that we implement \emph{Basic} EPaxos, the algorithm outlined
  in~\cite{moraru2013proof}. In general, Basic EPaxos has larger quorums and
  simpler recovery compared to the complete EPaxos protocol which is described
  in~\cite{moraru2013there}. For $f=1$ though, the performance of the two
  protocols is practically identical.
}, and BPaxos in Scala\footnote{%
  To mitigate the effects of JVM garbage collection on our experiments, we run
  our experiments with a large heap size of 32GB and run experiments for only a
  short amount of time.
}. Here, we measure the throughput and latency of the three protocols with
respect to three parameters: the number of clients, the conflict rate, and the
parameter $f$.

\begin{itemize}
  \item \textbf{Clients.}
    Clients propose commands in a closed loop. That is, after a client proposes
    a command, it waits to receive a response before proposing another command.
    We also run multiple clients in the same process, so deployments with a
    large number of clients (e.g., $1200$ clients) may use only a few client
    processes. We run $1$, $10$, $50$, $100$, $300$, $600$, and $1200$ clients.

  \item \textbf{Conflict rate.}
    The protocols replicate a key-value store state machine. Commands are
    single key gets or single key sets. With a conflict rate of $r$, $r$ of the
    commands are sets to a single key, while $(1 - r)$ of the commands are gets
    to other keys. Keys and values are both eight bytes. If commands are large,
    the data path and control path can be split, as in~\cite{biely2012s}. We
    run with $r=0$, $r=0.02$, and $r=0.1$. As described
    in~\cite{moraru2013there}, workloads in practice often have very low
    conflict rates.

  \item \textbf{$f$.}
    Recall that a protocol with parameter $f$ must tolerate at most $f$ failures.
    We run with $f=1$ and $f=2$.
\end{itemize}

We deploy the three protocols on m5.4xlarge EC2 instances within a single
availability zone. MultiPaxos deploys $f+1$ proposers and $2f+1$ acceptors.
EPaxos deploys $2f+1$ replicas. BPaxos deploys $2f+1$ dependency service nodes,
$2f+1$ acceptors, $f+1$ replicas, $5$ leaders and proposers when $f=1$, and
$10$ leaders and proposers when $f=2$. Every logical node is deployed on its
own physical machine, except that every BPaxos leader is co-located with a
BPaxos proposer. The protocols do not perform batching. All three protocols
implement thriftiness, a standard optimization~\cite{moraru2013there}.


\paragraph{Results.}
The benchmark results are shown in \figref{EvalLt}. In \figref{EvalLtF1} with
$f=1$, we see that MultiPaxos achieves a peak throughput of roughly 25,000 to
30,000 commands per second. EPaxos achieves a peak throughput of 30,000 to
40,000 depending on the conflict rate. BPaxos achieves 70,000 to 75,000,
nearly double that of EPaxos. Both EPaxos' and BPaxos' throughput decrease with
higher conflict rate. Higher conflict rates lead to graphs with more edges,
which increases the time required to topologically sort the graphs.

Note that the EPaxos implementation in~\cite{moraru2013there} achieves a peak
throughput of 45,000 to 50,000, slightly higher than our implementation. We
believe the discrepancy is due to implementation language (Go vs Scala) and
various optimizations performed in~\cite{moraru2013there} that we have not
implemented (e.g., a custom marshaling and RPC compiler~\cite{epaxos2019blog}).
We believe that if we apply the same optimizations to our implementations, all
three protocols' throughput would increase similarly.

In \figref{EvalLtF2}, with $f=2$, MultiPaxos' peak throughput has decreased to
20,000, EPaxos' peak throughput has decreased to 25,000, and BPaxos' peak
throughput has decreased to 65,000. As $f$ increases, the MultiPaxos leader has
to contact more nodes, so the drop in throughput is expected. With $f=2$,
EPaxos and BPaxos both have more leaders. More leaders increases the likelihood
of cycles, which slows the protocols down slightly. Moreover, when performing
dependency compaction as described in \secref{PracticalConsiderations}, the
number of dependencies scales with the number of leaders. BPaxos's peak
throughput is still roughly double that of EPaxos.

After sending a command, a BPaxos client must wait eight network delays to
receive a response. MultiPaxos and EPaxos require only four. Thus, under low
load, MultiPaxos and EPaxos have lower latency than BPaxos. In
\figref{EvalLtTable}, we see that with a single client, MultiPaxos and EPaxos
have a latency of roughly 0.25 ms, whereas BPaxos has a latency of 0.41. Under
high load though, BPaxos achieves lower latency. With 10 clients, the latency
of the three protocols is roughly even, and with 50 clients, BPaxos's latency
has already dropped below that of the other two protocols. In \figref{EvalLtF1}
and \figref{EvalLtF2}, we see that under higher loads of 600 and 1200 clients,
BPaxos's latency can be two to six times lower than the other two protocols.

Note that our results are specific to our deployment within a single data
center. With a geo-replicated deployment, MultiPaxos and EPaxos would both
outperform BPaxos. In this scenario, minimizing network delays is essential for
high performance.
Also note that BPaxos uses more machines than MultiPaxos and EPaxos in
order to achieve higher throughput via disaggregation and scaling. This makes
BPaxos a poor fit in resource constrained environments.

\subsection{Ablation Study}
{\begin{figure*}[ht]
  \centering
  \begin{subfigure}[b]{0.3\textwidth}
    \centering
    \includegraphics[width=\textwidth]{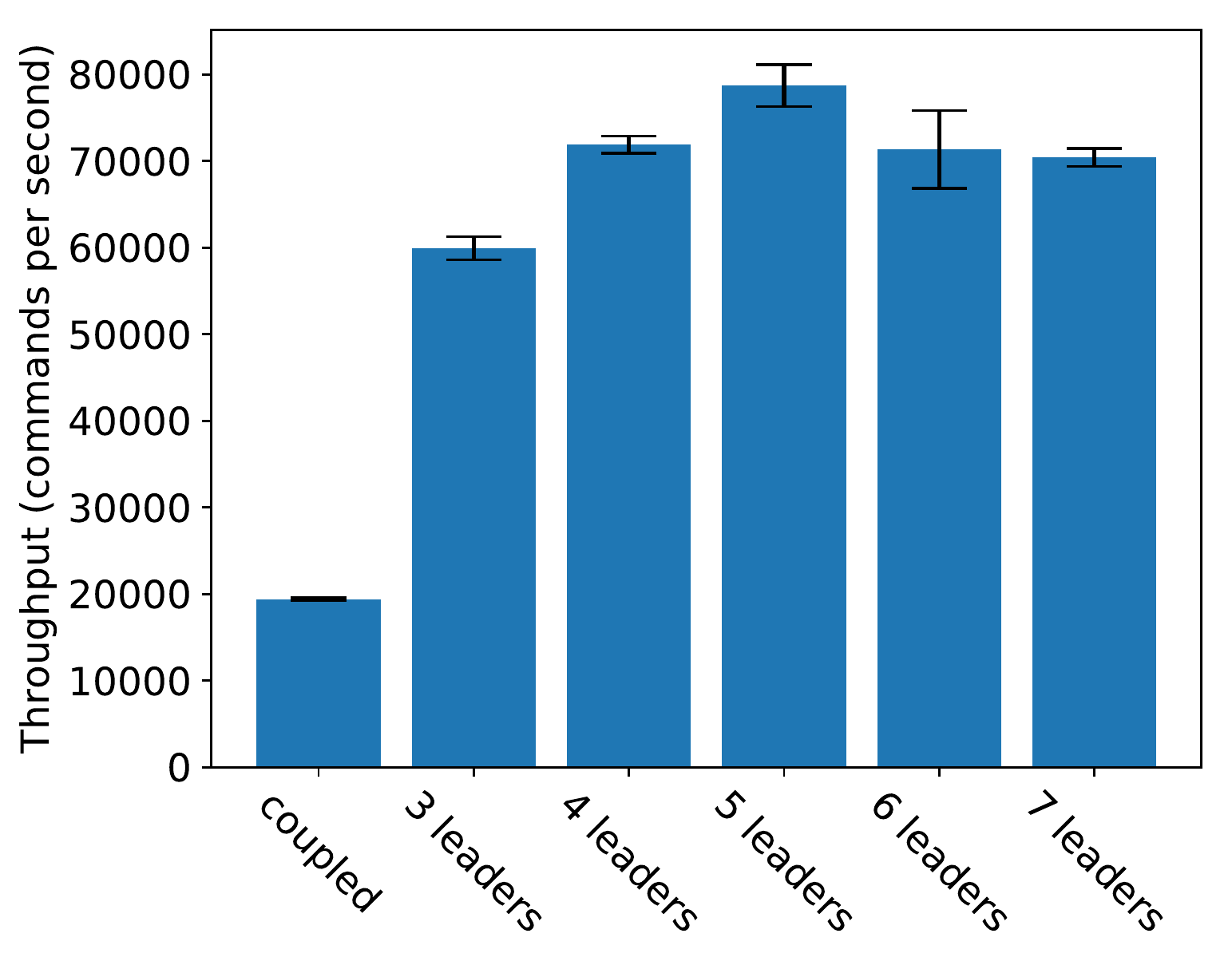}
    \caption{Throughput with 600 clients.}%
    \figlabel{EvalAblationHighLoadThroughput}
  \end{subfigure}\hspace{0.03\textwidth}
  \begin{subfigure}[b]{0.3\textwidth}
    \centering
    \includegraphics[width=\textwidth]{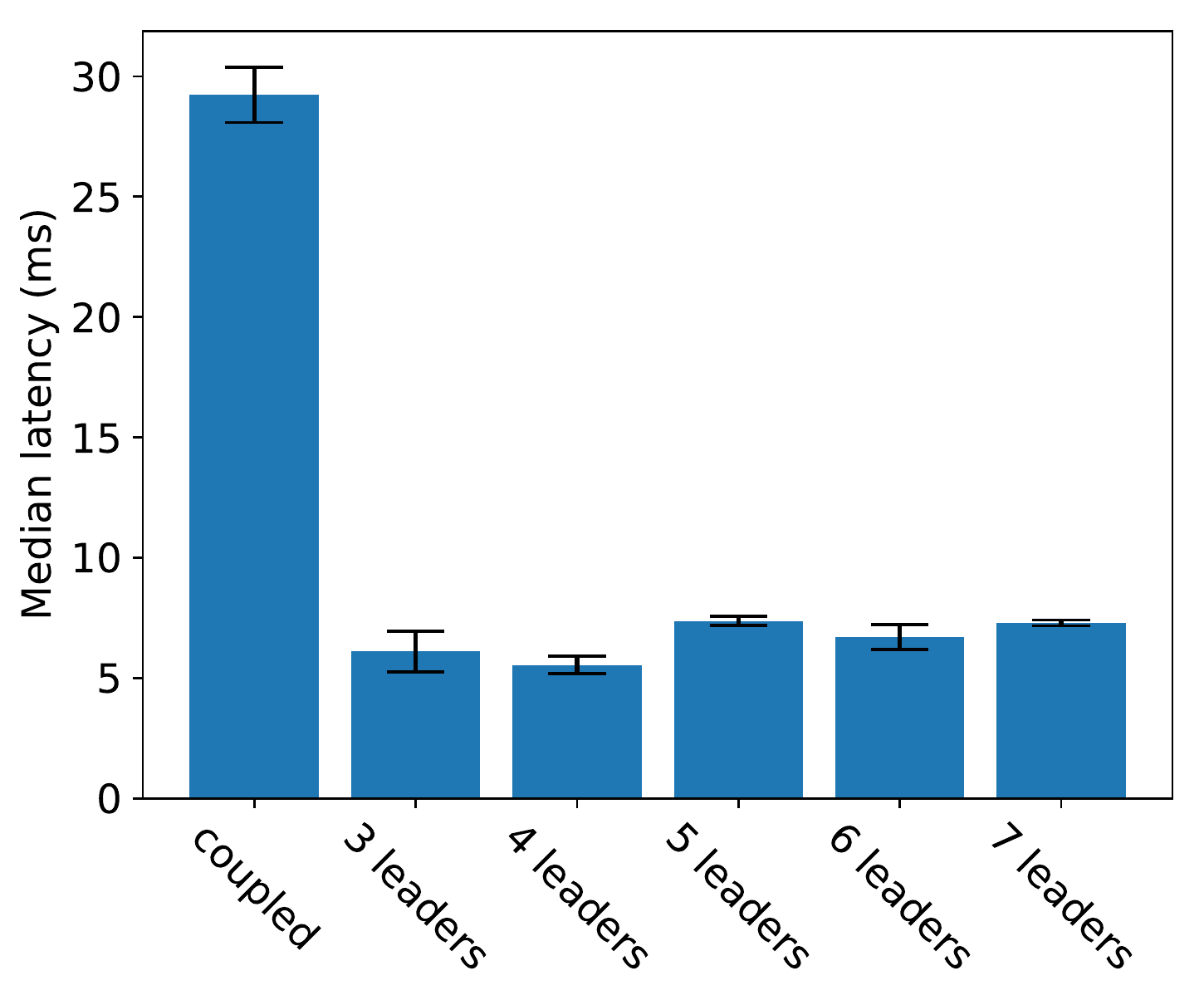}
    \caption{Median latency (ms) with 600 clients.}%
    \figlabel{EvalAblationHighLoadLatency}
  \end{subfigure}\hspace{0.03\textwidth}
  \begin{subfigure}[b]{0.3\textwidth}
    \centering
    \includegraphics[width=\textwidth]{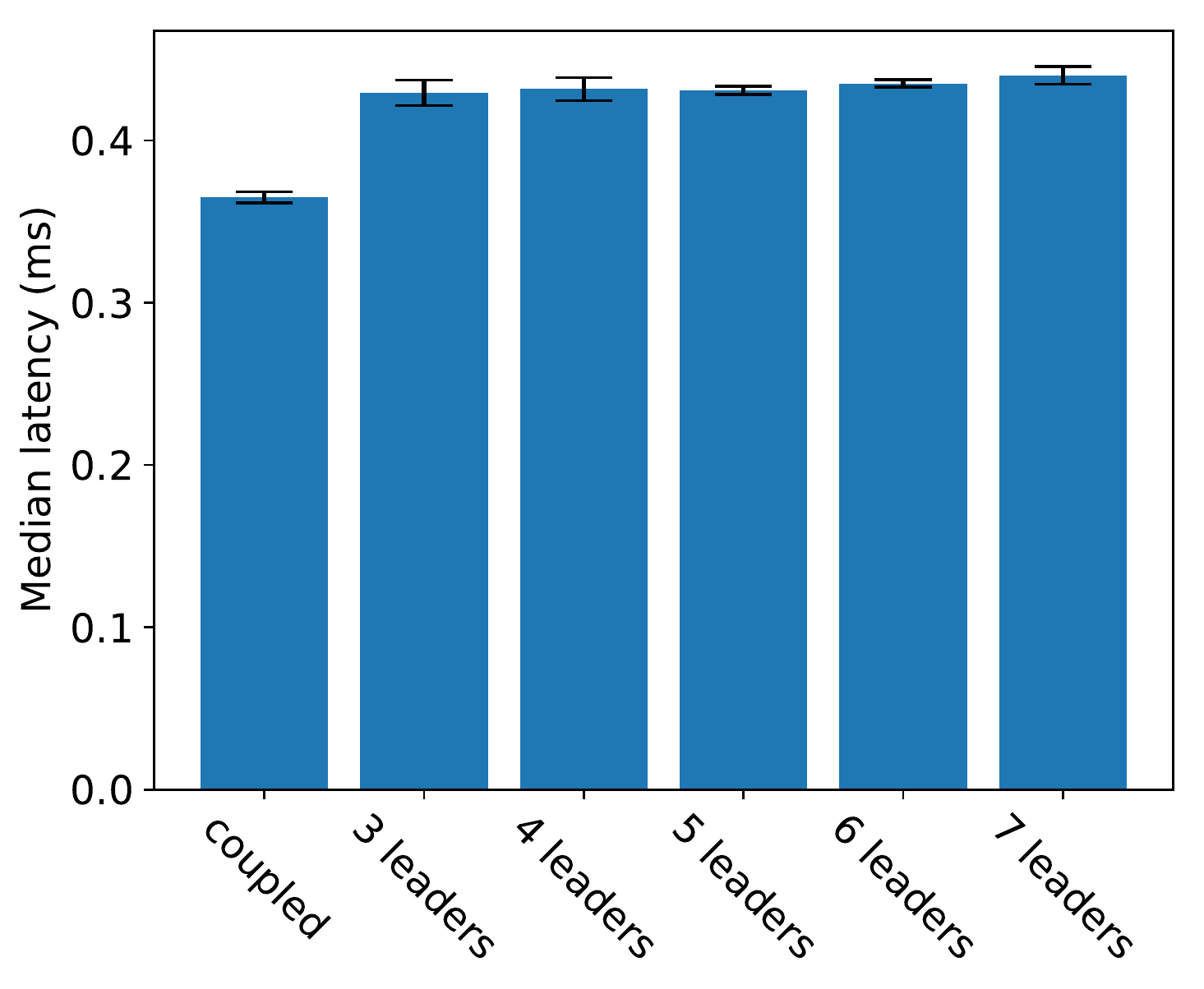}
    \caption{Median latency (ms) with one client.}%
    \figlabel{EvalAblationLowLoadLatency}
  \end{subfigure}
  \caption{%
    An ablation study showing the effect of disaggregation and scaling on
    throughput and latency with 600 clients and one client. Throughput for one
    client is not shown because it is simply the inverse of latency.
  }\figlabel{EvalAblation}
\end{figure*}
}

\paragraph{Experiment Description.}
The previous experiment showed that BPaxos can achieve roughly double the
throughput of EPaxos. Now, we analyze how BPaxos achieves these speedups. In
particular, we perform an ablation study to measure how BPaxos' disaggregation
and scaling affect its throughput. We repeat the experiment from above with
$f=1$, with $r=0$, and with $1$ and $600$ clients. We vary the number of
leaders from $3$ to $7$. Moreover, we also consider a ``coupled BPaxos''
deployment with three machines where each machine runs a single process that
acts as a leader, a dependency service node, a proposer, an acceptor, and a
replica. This artificially coupled BPaxos is similar to EPaxos in which every
replica plays many roles.


\paragraph{Results.}
The results of the experiment are shown in \figref{EvalAblation}. In
\figref{EvalAblationHighLoadThroughput}, we see the throughput of the coupled
BPaxos deployment is only 20,000 under high load. This is lower than both
MultiPaxos and EPaxos. When we decouple the protocol and run with three
leaders, the throughput increases threefold to 60,000. Disaggregating the nodes
introduces pipeline parallelism and reduces the load on the bottleneck
component. As we increase to five leaders, the throughput increases to a peak
of 75,000. At this point, the leaders are not the bottleneck and adding more
leaders only serves to slow down the protocol (for reasons similar to why the
$f=2$ deployment of BPaxos is slightly slower than the $f=1$ deployment).

In \figref{EvalAblationHighLoadLatency}, we see that the coupled protocol has
roughly six times the latency compared to the decoupled protocol under high
load. Moreover, the number of leaders doesn't have much of an impact on the
latency. In \figref{EvalAblationLowLoadLatency}, we see that the coupled
protocol has lower latency compared to the decoupled protocol under low load,
as fewer messages have to traverse the network. These results are consistent
with the previous experiment. Coupled protocols can achieve lower latency under
low load but decoupled protocols achieve higher throughput and lower latency
under high load.

In summary, both disaggregation and scaling contribute significantly to BPaxos'
increased throughput and lower latency under high load, and they also explain
why BPaxos has higher latency under low load.

%
%
\subsection{Batching}
Existing state machine replication protocols can perform batching to increase
their throughput at the cost of some latency~\cite{santos2012tuning,
santos2013optimizing, moraru2013proof}. BPaxos uses decoupling and scaling to
increase throughput at the cost of some latency. These two techniques
accomplish the same goal but are orthogonal. We can add batching to BPaxos to
increase its throughput even further. BPaxos leaders can collect batches of
commands from clients and place all of them within a single vertex. While
batching improves the throughput of all replication protocols, BPaxos' modular
design enables the protocol to take advantage of batching particularly well.

First, the overheads of receiving client messages and forming batches falls
onto the leaders. Because we can scale the leaders, these overheads can be
amortized until they are no longer a bottleneck. Moreover, the execution time
of proposers and acceptors increases linearly with the number of batches, not
the number of commands. Thus, increasing the batch size also amortizes their
overheads. Finally, as batch sizes grow, the number of vertices and edges in
the replicas' graphs shrinks. Thus, replicas can topologically sort the smaller
graphs faster.

We repeated the benchmarks from above with $f=1$ and $r=0$ with a batch size of
$1000$ and achieved a peak throughput of roughly $500,000$ commands per second
with a median latency of roughly 200 ms.
}
{\section{Related Work}

\paragraph{Paxos, VR, Raft}
MultiPaxos~\cite{lamport1998part, lamport2001paxos, van2015paxos,
lampson2001abcd, mazieres2007paxos}, Raft~\cite{mazieres2007paxos}, and
Viewstamped Replication~\cite{liskov2012viewstamped} are all single leader,
non-generalized state machine replication protocols. BPaxos has higher
throughput than these protocols because it is not bottlenecked by a single
leader. These protocols, however, have lower latency than BPaxos under low load
and are much simpler.

\paragraph{Mencius}
Mencius~\cite{mao2008mencius} is a multi-leader, non-generalized protocol in
which MultiPaxos log entries are round-robin partitioned among a set of
leaders. Because Mencius is not generalized, a log entry cannot be executed
until \emph{all} previous log entries have been executed. To ensure log entries
are being filled in appropriately, Mencius leaders perform all-to-all
communication between each other. This prevents leaders from scaling and
prevents other throughput-improving optimizations such as thriftiness.

\paragraph{Generalized GPaxos}
Generalized Paxos~\cite{lamport2005generalized} and GPaxos~\cite{sutra2011fast}
are generalized, but not fully multi-leader. Clients can send commands directly
to acceptors, behaving very much like a leader. However, in the face of
collisions, Generalized Paxos and GPaxos rely on a single leader to resolve the
collision. This single leader becomes a bottleneck in high contention workloads
and prevents scaling.

\paragraph{EPaxos and Caesar}
EPaxos~\cite{moraru2013there, moraru2013proof}, like BPaxos, is generalized and
multi-leader. EPaxos has lower latency than BPaxos (four network delays as
opposed to eight). EPaxos is a tightly coupled protocol. Every node acts as a
leader, dependency service node, proposer, acceptor, and replica. This
increases the load on the bottleneck nodes and also prevents disaggregation and
scaling. EPaxos, like Fast Paxos, optimistically takes a ``fast path'' before
sometimes reverting to a ``slow path''. This allows the protocol to execute a
command in four network delays in the best case, but fast paths significantly
complicate the protocol. For example, recovery in the face of fast paths can
deadlock if not implemented correctly. Caesar~\cite{arun2017speeding} is very
similar to EPaxos, with slight tweaks that increase the odds of the fast path
being taken.

\paragraph{A Family of Leaderless Generalized Algorithms}
In~\cite{losa2016brief}, Losa et al.\ present a generic architecture for
leaderless (what we call multi-leader) generalized consensus protocols. The
generic algorithm is very similar to BPaxos. In fact, some parts like the
dependency service are practically identical. However, the three page paper
does not present any implementations and focuses more on the theory behind
abstracting the commonalities shared by existing leaderless generalized
algorithms. BPaxos fleshes out the design and improves on the work by
discussing disaggregation, scaling, and practical considerations like ensuring
exactly once semantics and dependency compaction.

\paragraph{Multi-Core Paxos}
In~\cite{santos2013achieving}, Santos et al.\ describe how to increase the
throughput of a single MultiPaxos node by decomposing the node into multiple
components, with each component run on a separate core (e.g., one core for
sending messages, one for receiving messages, and so on). This work complements
BPaxos nicely. Santos et al.\ perform fine-grained decoupling to improve the
throughput of a single node, and BPaxos performs higher-level protocol
decoupling to improve the throughput of the entire protocol.

\paragraph{SpecPaxos, NOPaxos, CURP}
SpecPaxos~\cite{ports2015designing}, NOPaxos~\cite{li2016just}, and
CURP~\cite{park2019exploiting} all perform speculative execution to reduce
latencies as low as two network delays. However, speculative execution on the
fast path significantly increases the complexity of the protocols, and none of
the protocols focus on disaggregation or scaling as a means to increase
throughput.
}

\bibliographystyle{plain}
\bibliography{references}

\appendix
{\onecolumn
\section{BPaxos TLA+ Specification}\applabel{TlaSpec}
\begin{verbatim}
------------------------------ MODULE SimpleBPaxos -----------------------------
(******************************************************************************)
(* This is a specification of Simple BPaxos. To keep things simple and to     *)
(* make models more easily checkable, we abstract a way a lot of the          *)
(* unimportant details of Simple BPaxos. In particular, the specification     *)
(* does not model messages being sent between components and does not         *)
(* include leaders, proposers, or replicas. The consensus service is also     *)
(* left abstract. The core of Simple BPaxos is that dependency service        *)
(* responses (noops) are proposed to a consensus service. This core of the    *)
(* algorithm is what is modelled.                                             *)
(*                                                                            *)
(* Run `tlc SimpleBPaxosModel` to check the model.                            *)
(******************************************************************************)

EXTENDS Dict, Integers, FiniteSets

(******************************************************************************)
(* Constants                                                                  *)
(******************************************************************************)

\* The set of commands that can be proposed to BPaxos. In this specification,
\* every command can be proposed at most once. This is mostly to keep behaviors
\* finite. In a real execution of Simple BPaxos, a command can be proposed an
\* infinite number of times.
CONSTANT Command
ASSUME IsFiniteSet(Command)

\* The command conflict relation. Conflict is a symmetric relation over Command
\* such that two commands a and b conflict if (a, b) is in Conflict.
CONSTANT Conflict
ASSUME
    /\ Conflict \subseteq Command \X Command
    /\ \A ab \in Conflict : <<ab[2], ab[1]>> \in Conflict

\* We assume the existence of a special noop command that does not conflict
\* with any other command. Because noop is not in Command, it does not appear
\* in Conflict.
CONSTANT noop
ASSUME noop \notin Command

\* The set of dependency service nodes.
CONSTANT DepServiceNode
ASSUME IsFiniteSet(DepServiceNode)

\* The set of dependency service quorums. Every two quorums must interesct.
\* Typically, we deploy 2f + 1 dependency service replicas and let quorums be
\* sets of replicas of size f + 1.
CONSTANT DepServiceQuorum
ASSUME
    /\ \A Q \in DepServiceQuorum : Q \subseteq DepServiceNode
    /\ \A Q1, Q2 \in DepServiceQuorum : Q1 \intersect Q2 /= {}

--------------------------------------------------------------------------------

(******************************************************************************)
(* Variables and definitions.                                                 *)
(******************************************************************************)
\* In Simple BPaxos, vertex ids are of the form Q.i where Q is a leader and i is
\* a monotonically increasing id (intially zero). In this specification, we
\* don't even model Simple BPaxos nodes. So, we let instances be simple
\* integers. You might imagine we would say `VertexId == Nat`, but keeping
\* things finite helps TLC. Every command can be proposed at most once, so
\* allowing instances to range between 0 and |Command| works great.
VertexId == 0..Cardinality(Command)

\* A proposal is a command (or noop) and its dependencies.
Proposal == [cmd: Command \union {noop}, deps: SUBSET VertexId]

\* The proposal associated with noop. Noop doesn't conflict with any other
\* command, so its dependencies are always empty.
noopProposal == [cmd |-> noop, deps |-> {}]

\* A dependency graph is a directed graph where each vertex is labelled with an
\* vertex id and contains a command. We model the graph as a dictionary mapping
\* a vertex id to its command and dependencies.
DependencyGraph == Dict(VertexId, Proposal)

\* dependencyGraphs[d] is the dependency graph maintained on dependency
\* service node d.
VARIABLE dependencyGraphs

\* The next vertex id to assign to a proposed command. It is initially 0 and
\* incremented after every proposed command.
VARIABLE nextVertexId

\* A dictionary mapping vertex id to the command proposed with that vertex id.
VARIABLE proposedCommands

\* A dictionary mapping vertex id to the set of proposals proposed to the
\* consensus service in that instance.
VARIABLE proposals

\* A dictionary mapping vertex id to the proposal that was chosen by the
\* consensus service for that vertex id.
VARIABLE chosen

vars == <<
  dependencyGraphs,
  nextVertexId,
  proposedCommands,
  proposals,
  chosen
>>

TypeOk ==
  /\ dependencyGraphs \in Dict(DepServiceNode, DependencyGraph)
  /\ nextVertexId \in VertexId
  /\ proposedCommands \in Dict(VertexId, Command)
  /\ proposals \in Dict(VertexId, SUBSET Proposal)
  /\ chosen \in Dict(VertexId, Proposal)

--------------------------------------------------------------------------------

(******************************************************************************)
(* Actions.                                                                   *)
(******************************************************************************)

\* Propose a command `cmd` to Simple BPaxos. In a real implementation of Simple
\* BPaxos, a client would send the command to a leader, and the leader would
\* forward the command to the set of dependency service nodes. Here, we bypass
\* all that. The only thing to do here is to assign the command an instance and
\* make sure it hasn't already been proposed.
ProposeCommand(cmd) ==
  /\ cmd \notin Values(proposedCommands)
  /\ proposedCommands' = [proposedCommands EXCEPT ![nextVertexId] = cmd]
  /\ nextVertexId' = nextVertexId + 1
  /\ UNCHANGED <<dependencyGraphs, proposals, chosen>>

\* Given a dependency graph G and command cmd, return the set of vertices in G
\* that contain commands that conflict with cmd. For example, consider the
\* following dependency graph with commands b, c, and d in vertices v_b, v_c,
\* and v_d. If command a conflicts with c and d, then the dependencies of a are
\* v_c and v_d.
\*
\*                                 v_b     v_c
\*                                +---+   +---+
\*                                | b +---> c |
\*                                +-+-+   +---+
\*                                  |
\*                                +-v-+
\*                                | d |
\*                                +---+
\*                                 v_d
Dependencies(G, cmd) ==
  {v \in VertexId : G[v] /= NULL /\ <<cmd, G[v].cmd>> \in Conflict}

\* Here, dependency service node d processes a request in vertex v. Namely,
\* it adds v to its dependency graph (along with the command in
\* proposedCommands). Dependency service nodes also do not process a command
\* more than once. In a real Simple BPaxos implementation, the dependency
\* service node would receive a message from a leader and send dependencies
\* back to the leader. Also, a dependency service node could receive a request
\* from the leader more than once. We abstract all of this away.
DepServiceProcess(d, v) ==
  LET G == dependencyGraphs[d] IN
  /\ proposedCommands[v] /= NULL
  /\ G[v] = NULL
  /\ LET cmd == proposedCommands[v] IN
    /\ dependencyGraphs' = [dependencyGraphs EXCEPT ![d][v] =
                              [cmd |-> cmd, deps |-> Dependencies(G, cmd)]]
    /\ UNCHANGED <<nextVertexId, proposedCommands, proposals, chosen>>

\* Evalutes to whether a quorum of dependency service nodes have processed the
\* command in vertex v.
ExistsQuorumReply(Q, v) ==
  \A d \in Q : dependencyGraphs[d][v] /= NULL

\* Evaluates to the dependency service reply for vertex v from quorum Q of
\* dependency service nodes.
QuorumReply(Q, v) ==
  LET responses == {dependencyGraphs[d][v] : d \in Q} IN
  [cmd |-> (CHOOSE response \in responses : TRUE).cmd,
   deps |-> UNION {response.deps : response \in responses}]

\* Propose a noop gadget in vertex v to the consensus service. In a real
\* Simple BPaxos implementation, a proposer would propose a noop only
\* in some circumstances. In this model, we allow noops to be proposed at any
\* time.
ConsensusProposeNoop(v) ==
  /\ proposals' = [proposals EXCEPT ![v] = @ \union {noopProposal}]
  /\ UNCHANGED <<dependencyGraphs, nextVertexId, proposedCommands, chosen>>

\* Propose a dependency service reply in vertex v to the consensus service.
ConsensusPropose(v) ==
  \E Q \in DepServiceQuorum :
    /\ ExistsQuorumReply(Q, v)
    /\ proposals' = [proposals EXCEPT ![v] = @ \union {QuorumReply(Q, v)}]
    /\ UNCHANGED <<dependencyGraphs, nextVertexId, proposedCommands, chosen>>

\* Choose a value for vertex v.
ConsensusChoose(v) ==
  /\ proposals[v] /= {}
  /\ chosen[v] = NULL
  /\ chosen' = [chosen EXCEPT ![v] = CHOOSE g \in proposals[v] : TRUE]
  /\ UNCHANGED <<dependencyGraphs, nextVertexId, proposedCommands, proposals>>

--------------------------------------------------------------------------------

(******************************************************************************)
(* Specification.                                                             *)
(******************************************************************************)
Init ==
  /\ dependencyGraphs = [d \in DepServiceNode |-> [v \in VertexId |-> NULL]]
  /\ nextVertexId = 0
  /\ proposedCommands = [v \in VertexId |-> NULL]
  /\ proposals = [v \in VertexId |-> {}]
  /\ chosen = [v \in VertexId |-> NULL]

Next ==
  \/ \E cmd \in Command : ProposeCommand(cmd)
  \/ \E d \in DepServiceNode : \E v \in VertexId : DepServiceProcess(d, v)
  \/ \E v \in VertexId : ConsensusProposeNoop(v)
  \/ \E v \in VertexId : ConsensusPropose(v)
  \/ \E v \in VertexId : ConsensusChoose(v)

Spec == Init /\ [][Next]_vars

FairSpec == Spec /\ WF_vars(Next)

--------------------------------------------------------------------------------

(******************************************************************************)
(* Properties and Invariants.                                                 *)
(******************************************************************************)
\* The consensus service can choose at most command in any given instance.
ConsensusConsistency ==
  \A v \in VertexId :
    chosen[v] /= NULL => chosen'[v] = chosen[v]

AlwaysConsensusConsistency ==
  [][ConsensusConsistency]_vars

\* If two conflicting commands a and b yield dependencies deps(a) and deps(b)
\* from the dependency service, then a is in deps(b), or b is in deps(a), or
\* both.
DepServiceConflicts ==
  \A v1, v2 \in VertexId :
  \A Q1, Q2 \in DepServiceQuorum :
  IF v1 /= v2 /\ ExistsQuorumReply(Q1, v1) /\ ExistsQuorumReply(Q2, v2) THEN
     LET proposal1 == QuorumReply(Q1, v1)
         proposal2 == QuorumReply(Q2, v2) IN
     <<proposal1.cmd, proposal2.cmd>> \in Conflict =>
       v1 \in proposal2.deps \/ v2 \in proposal1.deps
  ELSE
    TRUE

\* Simple BPaxos should only choose proposed commands. This is inspired by [1].
\*
\* [1]: github.com/efficient/epaxos/blob/master/tla+/EgalitarianPaxos.tla
Nontriviality ==
  \A v \in VertexId :
    chosen[v] /= NULL =>
      \/ chosen[v].cmd \in Values(proposedCommands)
      \/ chosen[v].cmd = noop

\* If two conflicting commands a and b are chosen, then a is in deps(b), or b
\* is in deps(a), or both.
ChosenConflicts ==
  \A v1, v2 \in VertexId :
  IF v1 /= v2 /\ chosen[v1] /= NULL /\ chosen[v2] /= NULL THEN
    LET proposal1 == chosen[v1]
        proposal2 == chosen[v2] IN
     <<proposal1.cmd, proposal2.cmd>> \in Conflict =>
       v1 \in proposal2.deps \/ v2 \in proposal1.deps
  ELSE
    TRUE

\* True if every command is chosen.
EverythingChosen ==
  \A cmd \in Command :
    \E v \in VertexId :
      /\ chosen[v] /= NULL
      /\ chosen[v] = cmd

\* Fairness free theorem.
THEOREM
  Spec => /\ AlwaysConsensusConsistency
          /\ []DepServiceConflicts
          /\ []Nontriviality
          /\ []ChosenConflicts

\* True if no noops are chosen.
NoNoop ==
  ~ \E v \in VertexId :
    /\ chosen[v] /= NULL
    /\ chosen[v].cmd = noop

\* If no noops are chosen, then every command is chosen. This property is only
\* true for FairSpec.
NoNoopEverythingChosen ==
  []NoNoop => <>EverythingChosen

\* Fairness theorem.
THEOREM
  FairSpec => /\ AlwaysConsensusConsistency
              /\ []DepServiceConflicts
              /\ []Nontriviality
              /\ []ChosenConflicts
              /\ NoNoopEverythingChosen

================================================================================

--------------------------------- MODULE Dict ----------------------------------

(******************************************************************************)
(* TLA+ has the notion of functions. For example [A -> B] is the set of all   *)
(* functions from the set A to the set B. Functions are a lot like the        *)
(* dictionaries you find in a language like Python, except for one notable    *)
(* distinction. A function f \in [A \to B] is total, so every value a \in A   *)
(* must map to some value b \in B by way of f. Dictionaries from A to B, on   *)
(* the other hand, do not have to map every a \in A to some corresponding b   *)
(* \in B. This module builds up dictionaries out of functions. Doing so is    *)
(* relatively straightforward. We introduce a NULL value and model a          *)
(* Dictionary as a function [A \to B \cup {NULL}].                            *)
(******************************************************************************)

CONSTANT NULL

Dict(K, V) == [K -> V \cup {NULL}]

Keys(dict) == {k \in DOMAIN dict : dict[k] /= NULL}

Values(dict) == {dict[k] : k \in Keys(dict)}

Items(dict) == {<<k, dict[k]>> : k \in Keys(dict)}

================================================================================
\end{verbatim}
}
\end{document}